\newcommand{\R}{\mathbb{R}}
\newtheoremstyle{theorems}% name
  {3pt}%      Space above
  {3pt}%      Space below
  {\itshape}%         Body font
  {}%         Indent amount (empty = no indent, \parindent = para indent)
  {\bfseries}% Thm head font
  {.}%        Punctuation after thm head
  { }%     Space after thm head: " " = normal interword space;
\newtheoremstyle{proofparts}% name
  {3pt}%      Space above
  {0pt}%      Space below
  {}%         Body font
  {\parindent}%         Indent amount (empty = no indent, \parindent = para indent)
  {\scshape}% Thm head font
  {:}%        Punctuation after thm head
  {\newline}%     Space after thm head: " " = normal interword space;
\newtheoremstyle{claims}% name
  {2pt}%      Space above
  {2pt}%      Space below
  {}%         Body font
  {\parindent}%         Indent amount (empty = no indent, \parindent = para indent)
  {\bfseries}% Thm head font
  {.}%        Punctuation after thm head
  { }%     Space after thm head: " " = normal interword space;
\theoremstyle{theorems}
\newtheorem{thm}{Theorem}%[section]
\newtheorem{cor}[thm]{Corollary}
\newtheorem{prop}[thm]{Proposition}
\theoremstyle{definition}
\newtheorem{defn}[thm]{Definition}
\theoremstyle{proofparts}
\theoremstyle{claims}
\newtheorem*{claim*}{Claim}
\newcommand{\norm}[1]{\left\Vert #1 \right\Vert}
\newcommand{\abs}[1]{\left\vert #1 \right\vert}
\DeclareMathOperator{\sgn}{sgn}
\definecolor{emphcolor}{rgb}{0,0,1}           % EMPHASIS COLOUR
\newcommand{\longip}[3]{\left\langle #1 \middle\vert #2 \middle\vert #3 \right\rangle}
\newcommand{\ud}{\,\textnormal{d}}
\newcommand{\bra}[1]{\left\langle #1 \right\vert}
\newcommand{\ket}[1]{\left\vert #1 \right\rangle}
\let\epsilon\varepsilon
\let\varepsilon\epsilon
\let\eps\epsilon
\title{The BCS Energy Gap at Low Density}
\author{Asbj\o rn B\ae kgaard Lauritsen\thanks{\url{asbjorn.lauritsen@gmail.com}}
\\ Department of Mathematics, University of Copenhagen, \\ 
Universitetsparken 5, 2100 Copenhagen, Denmark}
\begin{document}
\maketitle
\begin{abstract}
\noindent
We show that the energy gap for the BCS gap equation is 
\[
	\Xi = \mu \left( 8 e^{-2} + o(1)\right) \exp\left( \frac{\pi}{2\sqrt{\mu} a}\right)
\]
in the low density limit $\mu \to 0$. 
Together with the similar result for the critical temperature \cite{hainzl.seiringer.scat.length}
this shows that, in the low density limit, 
the ratio of the energy gap and critical temperature is a universal constant 
independent of the interaction potential $V$.
The results hold for a class of potentials with negative scattering length $a$ and no bound states.
\end{abstract}

\section{Introduction and Main Result}
The BCS gap equation at zero temperature
\[
	\Delta(p) = - \frac{1}{(2\pi)^{3/2}} \int_{\R^3} \hat V(p-q) \frac{\Delta(q)}{E(q)} \ud q,
\]
where $E(p) = \sqrt{ (p^2 - \mu)^2 + |\Delta(p)|^2}$, 
is an important part of the BCS theory of superfluidity and -conductivity \cite{bcs.original}.
The function $\Delta$ has the interpretation of the order parameter 
describing pairs of Fermions (Cooper pairs). 
The potential $V$ models an effective local interaction. 
(In the case of superconductivity it is between electrons.)
We will assume that $V \in L^1(\R^3)$, in which case 
$V$ has a Fourier transform given by
$\hat V(p) = (2\pi)^{-3/2} \int_{\R^3} V(x) e^{-ipx} \ud x$.
The chemical potential $\mu > 0$ controls the particle density. 
We study here the limit of low density, i.e. $\mu \to 0$.
In the low density limit, it is known that superfluid/-conducting behaviour
is well described by BCS theory \cite{leggett.diatomic,nozieres.schmitt-rink}.

The low density limit for BCS theory has previously been studied in \cite{hainzl.seiringer.scat.length},
where the critical temperate has been calculated. The critical temperature satisfies that,
for temperatures below the critical temperature the system is in a superfluid/-conducting state.
For temperatures above, it is not.
We will here study the energy gap (at zero temperature)
\[
	\Xi = \inf E(p) = \inf \sqrt{(p^2 - \mu)^2 + |\Delta(p)|^2}.
\]
in this limit. The function $E$ has the interpretation as 
the dispersion relation for the corresponding BCS Hamiltonian, 
and so $\Xi$ has the interpretation of an energy gap, see \cite[Appendix A]{hainzl.hamza.seiringer.solovej}.

The potential non-uniqueness of such functions $\Delta$ gives rise 
to difficulty in evaluating this $\Xi$. However, under the assumption we impose on $V$
(namely that its Fourier transform is non-positive),
it is proved in \cite{hainzl.seiringer.08} that $\Delta$ is unique 
(up to a constant global phase).

An analysis of the energy gap in the low coupling limit is given in \cite{hainzl.seiringer.08}.
There one considers a potential $\lambda V$, for $V$ fixed and $\lambda \to 0$.
In this limit it is shown that the energy gap satisfies $\Xi \sim A \exp(-B/\lambda)$ 
for explicit constants $A$ and $B$. 
We are here instead interested in the limit $\mu \to 0$ for $V$ fixed. 
Similarly as for the critical temperature in the low density limit \cite{hainzl.seiringer.scat.length}
this turns out to be related to the scattering length of $2V$, which we now define.

\begin{defn}[{\cite[Definition 2]{hainzl.seiringer.scat.length}}]
Let $V \in L^{1}(\R^3)\cap L^{3/2}(\R^3)$ be real-valued. By $V(x)^{1/2}$ we will mean
$V(x)^{1/2} = \sgn(V(x)) |V(x)|^{1/2}$. 
Suppose that $-1$ is not in the spectrum of the associated Birman-Schwinger operator 
$V^{1/2} \frac{1}{p^2} |V|^{1/2}$. Then the scattering length $a$ of $2V$ is
\[
	a = \frac{1}{4\pi} \longip{|V|^{1/2}}{\frac{1}{1 + V^{1/2}\frac{1}{p^2}|V|^{1/2}}}{V^{1/2}}.
\]
Here, operators that are functions of $p$ are to be interpreted as multiplication operators in Fourier space.
\end{defn}

\noindent
In \cite[Appendix A]{hainzl.seiringer.scat.length} it is explained, 
why it is sensible to call this a scattering length.
%Note that since $V\in L^{3/2}$ we have that $V^{1/2}\frac{1}{p^2}|V|^{1/2}$ is Hilbert-Schmidt by 
%the Hardy-Littlewood-Sobolev inequality \cite[Theorem 4.3]{analysis}. 

With this, we may now state our main theorem. 
\begin{thm}\label{thm.energy.gap.low.density}
Let $V$ be radial and assume that $V(x)(1 + |x|) \in L^1(\R^3) \cap L^{3/2}(\R^3)$, 
$\hat V \leq 0$, $\hat V(0) < 0$, 
that $\norm{V}_{L^{3/2}} < S_3$, 
and that the scattering length $a < 0$ is negative. 
Then, 
\[
  \lim_{\mu \to 0} \left( \log \frac{\mu}{\Xi} + \frac{\pi}{2\sqrt\mu a}\right) = 2 - \log 8.
\]
\end{thm}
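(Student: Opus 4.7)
\emph{Strategy.} Since $\hat V\le 0$, existence and uniqueness (up to phase) of a non-trivial solution $\Delta \ge 0$ of the gap equation are known from \cite{hainzl.seiringer.08}. The gap equation is equivalent to the Birman--Schwinger statement $-1\in\spec B_\Delta$, where
\[
B_\Delta := V^{1/2}\,\tfrac{1}{E_\Delta(p)}\,|V|^{1/2}, \qquad E_\Delta(p)=\sqrt{(p^2-\mu)^2+\Delta(p)^2}.
\]
The plan is to write $B_\Delta$ as the ``unperturbed'' operator $V^{1/2}p^{-2}|V|^{1/2}$, which encodes the scattering length $a$, plus an approximately rank-one contribution concentrated on the shrinking Fermi sphere, then reduce the spectral condition to a scalar equation for the leading asymptotics of $\Xi$.

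\emph{Reduction to a scalar equation.} Decompose $B_\Delta = V^{1/2}p^{-2}|V|^{1/2} + V^{1/2}(E_\Delta^{-1}-p^{-2})|V|^{1/2}$. The operator $L := 1 + V^{1/2}p^{-2}|V|^{1/2}$ is invertible thanks to $\norm{V}_{L^{3/2}}<S_3$, and by definition $\langle|V|^{1/2}, L^{-1}V^{1/2}\rangle = 4\pi a$. In position space the second operator has kernel $V(x)^{1/2}(G_\Delta-G_0)(x-y)|V(y)|^{1/2}$, where $G_\Delta$ is the inverse Fourier transform of $E_\Delta^{-1}$ and $G_0(x)=(4\pi|x|)^{-1}$. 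Because $G_\Delta-G_0$ is bounded, continuous, and varies on scale $\mu^{-1/2}\gg \diam\supp V$, I would prove
\[
V^{1/2}(E_\Delta^{-1}-p^{-2})|V|^{1/2} = m(\mu,\Delta)\,|V^{1/2}\rangle\langle |V|^{1/2}| + \mathcal R_\mu, \quad m(\mu,\Delta) := \int_{\R^3}\!\Bigl[\tfrac{1}{E_\Delta(p)}-\tfrac{1}{p^2}\Bigr]\tfrac{dp}{(2\pi)^3},
\]
with $\norm{\mathcal R_\mu}_{\mathrm{op}} = o(\sqrt\mu)$. A standard rank-one / Schur-complement manipulation of $(1+B_\Delta)\psi=0$ then reduces the spectral condition, to leading order, to the scalar equation $1 + 4\pi a\,m(\mu,\Delta)=0$.

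\emph{Evaluation of $m$ and conclusion.} For radial $\Delta$ the minimum of $E_\Delta$ is attained near $|p|=\sqrt\mu$, and one expects $\Xi = \Delta(\sqrt\mu)(1+o(1))$ together with $\Delta(p)\approx\Xi$ on a shell of width $\gg\Xi$ around the Fermi surface. Replacing $\Delta(p)\mapsto\Xi$ in the integrand and computing the radial integral via the substitution $t=(k^2-\mu)/\Xi$, I would obtain
\[
m(\mu,\Delta) = \frac{\sqrt\mu}{2\pi^2}\bigl[\log(8\mu/\Xi)-2\bigr] + o(\sqrt\mu).
\]
Inserting this into $4\pi a\,m = -1$ and rearranging gives $\log(\mu/\Xi) + \pi/(2\sqrt\mu\,a) = 2-\log 8 + o(1)$, which is the claim.

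\emph{Main obstacle.} Unlike the critical-temperature analysis of \cite{hainzl.seiringer.scat.length}, the zero-temperature gap equation is nonlinear in $\Delta$: both $\mathcal R_\mu$ and the integrand defining $m$ depend on the unknown profile of $\Delta$. The hard part is therefore (i) to bound $\norm{\mathcal R_\mu}_{\mathrm{op}} = o(\sqrt\mu)$ uniformly over the admissible $\Delta$, and (ii) to bootstrap pointwise regularity of $\Delta$ from the equation itself---enough to justify both $\Xi = \Delta(\sqrt\mu)(1+o(1))$ and the constancy replacement $\Delta(p)\rightsquigarrow\Xi$ inside $m$. I expect to handle this by a two-sided comparison: sandwich the scalar equation $4\pi a\,m(\mu,\Delta)=-1$ between the values of $m$ computed with pointwise upper and lower constant approximations of $\Delta$ on the relevant shell, and close the argument via a priori Hölder/Lipschitz estimates on $\Delta$ obtained directly from the gap equation.
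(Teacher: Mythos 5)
Your strategy is the same as the paper's: the same Birman--Schwinger decomposition into $V^{1/2}p^{-2}|V|^{1/2}$ plus $m_\mu(\Delta)\ket{V^{1/2}}\bra{|V|^{1/2}}$ plus a remainder, the same rank-one reduction to the scalar equation $4\pi a\, m_\mu(\Delta)=-1+o(\sqrt\mu)$, and the same evaluation of $m_\mu(\Delta)$ after freezing $\Delta$ at the Fermi surface. However, two of your stated steps do not go through as written. First, the bound $\norm{\mathcal R_\mu}_{\mathrm{op}}=o(\sqrt\mu)$ is stronger than the hypotheses allow: the kernel of the remainder is controlled by $|V(x)|^{1/2}|V(y)|^{1/2}\,|x-y|^{\gamma}$ times a momentum integral of size $\mu^{\gamma/2}m_\mu(\Delta)$, and Hilbert--Schmidt integrability forces $\gamma\le 1/2$ (so that $\iint|V(x)||V(y)||x-y|\,dx\,dy<\infty$ under $V(x)(1+|x|)\in L^1$); taking $\gamma=1$ would require $\int |V(x)||x|^2\,dx<\infty$, which is not assumed. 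The paper only proves $\norm{A_{\Delta,\mu}}_2=o(\mu^{1/4})$, which suffices for the term quadratic in the remainder, and then handles the term linear in the remainder separately by testing it against $f=(1+V^{1/2}p^{-2}|V|^{1/2})^{-1}V^{1/2}$ and using $f|V|^{1/2}(1+|x|)\in L^1$ to absorb a full power of $|x-y|$. Your plan needs this extra structural step; a pure operator-norm estimate on $\mathcal R_\mu$ will not reach $o(\sqrt\mu)$.

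Second, the part you defer to the ``main obstacle'' is where most of the work lies, and the paper's route is different from the one you propose. Rather than bootstrapping regularity from the gap equation, it exploits the variational origin of $\Delta=-2\widehat{V\alpha}$: from $\mathcal F^{\mu,V}(\alpha_{\mu,V})\le 0$ and a scaling argument one gets $\norm{\alpha_{\mu,V}}_{H^1}\le C\mu^{3/4}$, hence $\Delta(p)\le C\mu^{3/4}$ and the Lipschitz bound $|\Delta(p')-\Delta(p)|\le C\mu^{3/4}|p'-p|$; and a separate quadratic-polynomial argument on the functional yields $\norm{\hat\alpha 1_{\{|p|>\eps\}}}_{L^{3/2}}\le C\norm{\hat\alpha 1_{\{|p|\le\eps\}}}_{L^1}$, which is what gives the uniform two-sided comparison $\Delta(p)\le C\Delta(\sqrt\mu)$ needed to freeze $\Delta$ in $m_\mu$. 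Note also that your formula for $m_\mu$ presupposes $\Delta(\sqrt\mu)/\mu\to 0$; this is not automatic and requires a preliminary spectral argument (if $\Delta(\sqrt\mu)\gtrsim\mu$ then $m_\mu(\Delta)\to 0$ and $B_\Delta$ could not have $-1$ in its spectrum, since $V^{1/2}p^{-2}|V|^{1/2}>-1$). Finally, your claim $\Xi=\Delta(\sqrt\mu)(1+o(1))$ is exactly what the Lipschitz bound delivers at the end, so it should be listed as a consequence of the a priori estimates rather than an input. With these repairs your outline coincides with the paper's proof.
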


\noindent
That is, in the limit of low density, the energy gap satisfies
\[
  \Xi = \mu \left(8 e^{-2} + o(1)\right) \exp \left(\frac{\pi}{2\sqrt{\mu} a}\right).
\]
This is known in the physics literature \cite{leggett.diatomic}. 
Here $S_3 = \frac{3}{4}2^{2/3} \pi^{4/3} \approx 5.4779$ is the best constant in Sobolev's inequality \cite[Theorem 8.3]{analysis}.
The assumption that $\norm{V}_{L^{3/2}} < S_3$ gives that
$p^2 + \lambda V > 0$ for any $0 < \lambda \leq 1$ by Sobolev's inequality, see \cite[section 11.3]{analysis}.
Thus, by the Birman-Schwinger principle, the operator 
$\lambda V^{1/2} \frac{1}{p^2} |V|^{1/2}$ does not have $-1$ as an eigenvalue. 
Varying $\lambda$ we thus get that the spectrum of $V^{1/2} \frac{1}{p^2} |V|^{1/2}$ 
is contained in $(-1, \infty)$. 
In particular, the scattering length is finite. 
Also, for a $V$ satisfying the assumption it also satisfies the assumptions of \cite[Theorem 1]{hainzl.seiringer.scat.length}.
This states that the critical temperature satisfies
\[
	T_c = \mu \left( \frac{8}{\pi}e^{\gamma - 2} + o(1)\right) \exp\left(\frac{\pi}{2\sqrt{\mu}a}\right).
\]
We thus immediately get following.
\begin{cor}
Let $V$ be radial and assume that $V(x)(1 + |x|) \in L^1(\R^3) \cap L^{3/2}(\R^3)$, 
$\hat V \leq 0$, $\hat V(0) < 0$, 
that $\norm{V}_{L^{3/2}} < S_3$, 
and that the scattering length $a < 0$ is negative. 
Then,
\[
  \lim_{\mu \to 0} \frac{\Xi}{T_c} = \pi e^{-\gamma} \approx 1.7639,
\]
where $\gamma \approx 0.577$ is the Euler-Mascheroni constant.
\end{cor}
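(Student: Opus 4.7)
The statement is an immediate consequence of \Cref{thm.energy.gap.low.density} combined with the asymptotic
\[
  T_c = \mu\bigl(\tfrac{8}{\pi}e^{\gamma-2}+o(1)\bigr)\exp\!\Bigl(\frac{\pi}{2\sqrt{\mu}\,a}\Bigr)
\]
for the BCS critical temperature established in \cite[Theorem 1]{hainzl.seiringer.scat.length}. The only genuine verification is that the hypotheses of \Cref{thm.energy.gap.low.density} imply those of \cite[Theorem 1]{hainzl.seiringer.scat.length}, so that the two asymptotics may be used simultaneously with the \emph{same} scattering length $a$. This is already observed in the paragraph preceding the corollary: the bound $\norm{V}_{L^{3/2}} < S_3$ together with Sobolev's inequality and the Birman--Schwinger principle force the spectrum of $V^{1/2} p^{-2} |V|^{1/2}$ into $(-1,\infty)$, which is exactly the spectral condition that guarantees the scattering length is finite; the remaining integrability, radiality, and sign assumptions either coincide with or are stronger than those required by the critical-temperature result.

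\textbf{Computing the ratio.} With both asymptotics in hand, one simply writes
\[
  \frac{\Xi}{T_c} = \frac{\mu\bigl(8 e^{-2}+o(1)\bigr)\exp\!\bigl(\pi/(2\sqrt\mu\,a)\bigr)}{\mu\bigl(\tfrac{8}{\pi}e^{\gamma-2}+o(1)\bigr)\exp\!\bigl(\pi/(2\sqrt\mu\,a)\bigr)}.
\]
The factors of $\mu$ and, crucially, the exponentials $\exp(\pi/(2\sqrt\mu\,a))$ cancel exactly, which is only legitimate because the scattering length appearing in both numerator and denominator is the same object, as ensured by the hypothesis check above. Passing to the limit and simplifying,
\[
  \lim_{\mu\to 0}\frac{\Xi}{T_c} = \frac{8 e^{-2}}{\tfrac{8}{\pi}e^{\gamma-2}} = \pi e^{-\gamma},
\]
which is the claimed identity.

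\textbf{Where the difficulty sits.} At the level of the corollary there is no genuine obstacle: all of the analytic work is packaged inside \Cref{thm.energy.gap.low.density} and inside \cite[Theorem 1]{hainzl.seiringer.scat.length}. The content of the statement is the \emph{cancellation}: the entire dependence of both $\Xi$ and $T_c$ on the interaction $V$ is carried by the common factor $\mu \exp(\pi/(2\sqrt\mu\,a))$, and this drops out of the ratio, leaving a universal dimensionless constant independent of $V$. This is the rigorous low-density counterpart of the classical BCS universality prediction $\Xi/T_c = \pi e^{-\gamma} \approx 1.76$.
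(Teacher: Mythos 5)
Your proposal is correct and is exactly the argument the paper intends: the paragraph preceding the corollary verifies that the hypotheses imply those of \cite[Theorem 1]{hainzl.seiringer.scat.length}, and the corollary then follows "immediately" by dividing the two asymptotic formulas, with the common factor $\mu\exp(\pi/(2\sqrt{\mu}\,a))$ cancelling to leave $\pi e^{-\gamma}$. Nothing is missing.
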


\noindent
That is, in the low density limit, the ratio of the energy gap and critical temperature
tends to some universal limit. This is known in the physics literature \cite{gorkov.melik-barkhudarov}.
Also, this property has been observed before in the low coupling limit \cite{hainzl.seiringer.08,bcs.original,nozieres.schmitt-rink}.
Moreover, the universal constant is the same in both the low density and weak coupling limits.

The assumptions we impose on the potential $V$ is more or less the assumptions of \cite{hainzl.seiringer.08,hainzl.seiringer.scat.length}.
The only difference is the assumption that $\norm{V}_{L^{3/2}} < S_3$ instead of the assumption that $V^{1/2}\frac{1}{p^2}|V|^{1/2}$ 
has spectrum contained in $(-1, \infty)$. As discussed above our assumption here is stronger. 
We need such a stronger assumption, since we need to control different scalings of the potential. 
As discussed in \cite{hainzl.seiringer.scat.length} our assumption captures that the operator $p^2 + V$ does not 
have any bound states.

We will follow the description of BCS theory made in 
\cite{hainzl.hamza.seiringer.solovej,hainzl.seiringer.08,hainzl.seiringer.scat.length,frank.hainzl.naboko.seiringer,braunlich.hainzl.seiringer,hainzl.seiringer.16,hainzl.seiringer.review.08}.
There the BCS gap equation at zero temperature arises as the Euler-Lagrange equations for minimisers
of the BCS functional at zero temperature
\[
	\mathcal{F}^{\mu, V}(\alpha) 
		= \frac{1}{2} \int |p^2 - \mu| \left( 1 - \sqrt{1 - 4 |\hat \alpha(p)|^2}\right)
			+ \int V(x) |\alpha(x)|^2 \ud x.
\]
For a minimiser $\alpha$ one then defines
\[
	\Delta(p) = - 2 \widehat{V \alpha}(p).
\]
This $\Delta$ then satisfies the BCS gap equation, see \cite{hainzl.seiringer.16}.
We now give the proof of our theorem.
Part of the proof is inspired by and based on the methods of \cite{hainzl.seiringer.scat.length}.

\section{Proof}
One of the key ideas in the proof is to study the asymptotics of 
\[
  m_\mu(\Delta) = \frac{1}{(2\pi)^3} \int \frac{1}{E(p)} - \frac{1}{p^2} \ud p.
\]
This is similar to what is done in \cite{hainzl.seiringer.08,hainzl.seiringer.scat.length}
for the study of the critical temperature and energy gap in the low coupling limit
and for the critical temperature in the low density limit.

The structure of the proof is as follows. 
First we find bounds on the minimiser $\alpha$ of the BCS functional. 
These then translate to bounds on the function $\Delta$, which gives some asymptotic 
behaviour of $m_\mu(\Delta)$. Armed with this, we employ the methods of \cite{hainzl.seiringer.scat.length}
to prove our theorem.

In \cite[Lemma 2]{hainzl.seiringer.08} it is proven, 
that there exists a unique minimiser $\alpha$ of the BCS functional at zero temperature 
with (strictly) positive Fourier transform. 
This we will denote by $\alpha_{\mu, V}$. 
By scaling it follows that
\[
  \mathcal{F}^{\mu, V}(\alpha) = \mu^{5/2} \mathcal{F}^{1, \sqrt{\mu}V_{\sqrt\mu}}(\beta)
  = \mu^{5/2} \left[
  	\frac{1}{2} \int |p^2 - 1| \left(1 - \sqrt{1 - 4 |\hat \beta(p)|^2}\right) \ud p
  		+ \sqrt{\mu} \int  V_{\sqrt\mu} |\beta|^2 \ud x\right],
\]
where $\beta(x) = \mu^{-3/2} \alpha(x / \sqrt{\mu})$ and $V_{\sqrt{\mu}}(x) = \mu^{-3/2} V(x/\sqrt{\mu})$. 
Note that $\norm{V_{\sqrt{\mu}}}_{L^1} = \norm{V}_{L^1}$ and that
$\norm{\sqrt{\mu} V_{\sqrt{\mu}}}_{L^{3/2}} = \norm{V}_{L^{3/2}}$.
With this, we thus see that the minimisers with positive Fourier transform satisfy
\[
  \alpha_{\mu, V}(x) = \mu^{3/2} \alpha_{1, \sqrt{\mu} V_{\sqrt{\mu}}} (\sqrt{\mu} x).
\]
We now bound this.
\begin{prop}
In the limit $\mu\to 0$ the minimiser satisfies $\norm{\alpha_{\mu, V}}_{H^1} \leq C \mu^{3/4}$.
\end{prop}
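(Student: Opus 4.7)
The plan is variational. Since $\mathcal{F}^{\mu,V}(0) = 0$, the minimiser $\alpha := \alpha_{\mu,V}$ satisfies $\mathcal{F}^{\mu,V}(\alpha) \leq 0$. I would combine this with the a priori pointwise bound $|\hat\alpha(p)| \leq 1/2$ (imposed on the domain of the BCS functional so that the square root is real) to conclude. There is no need to pass to the rescaled minimiser introduced just before the proposition; the estimate can be obtained directly in the original variables.

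For the first step, the elementary inequality $1 - \sqrt{1-4x} \geq 2x$ for $x \in [0,1/4]$, combined with H\"older and Sobolev's inequality $\norm{\alpha}_{L^6}^2 \leq S_3^{-1}\norm{\nabla\alpha}_{L^2}^2$, turns $\mathcal{F}^{\mu,V}(\alpha) \leq 0$ into
\[
\int_{\R^3} |p^2 - \mu|\,|\hat\alpha(p)|^2 \ud p \leq - \int V |\alpha|^2 \ud x \leq \norm{V}_{L^{3/2}} \norm{\alpha}_{L^6}^2 \leq c\,\norm{\nabla\alpha}_{L^2}^2,
\]
where $c := \norm{V}_{L^{3/2}}/S_3 \in (0,1)$ by the standing hypothesis. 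Dropping the low-momentum part via $|p^2-\mu|\geq p^2-\mu$ and rearranging gives the key inequality
\[
(1-c)\norm{\nabla\alpha}_{L^2}^2 \leq \mu \norm{\alpha}_{L^2}^2.
\]

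For the second step I need an independent estimate on $\norm{\alpha}_{L^2}^2$, and here the bound $|\hat\alpha|\leq 1/2$ enters. Splitting the Plancherel integral at radius $R$ and using a Chebyshev-type estimate outside the ball gives
\[
\norm{\alpha}_{L^2}^2 = \int |\hat\alpha|^2 \ud p \leq \frac{\pi R^3}{3} + R^{-2}\norm{\nabla\alpha}_{L^2}^2.
\]
Substituting the first-step bound and choosing $R^2 = 2\mu/(1-c)$ makes the $\norm{\alpha}_{L^2}^2$-term on the right absorb with coefficient $1/2$, so $\norm{\alpha}_{L^2}^2 \leq C\mu^{3/2}$. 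Feeding this back, $\norm{\nabla\alpha}_{L^2}^2 \leq C\mu^{5/2}$, hence $\norm{\alpha}_{H^1} \leq C\mu^{3/4}$, as claimed.

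The one point that might seem subtle is the need to use both inputs together: the variational inequality offers no control on $\norm{\alpha}_{L^2}^2$ by itself because the kinetic weight $|p^2-\mu|$ is degenerate on the Fermi sphere, while the pointwise bound $|\hat\alpha|\leq 1/2$ alone yields $\norm{\alpha}_{L^2}^2 \lesssim R^3$ for any cutoff without fixing a scale. Matching the two selects $R\sim\sqrt\mu$, which is exactly the Fermi momentum and thus the natural momentum scale in the low-density limit.
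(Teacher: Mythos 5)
Your argument is correct, and it reaches the bound by a genuinely different route than the paper's. The paper first rescales, writing $\alpha_{\mu,V}(x)=\mu^{3/2}\alpha_{1,\sqrt{\mu}V_{\sqrt{\mu}}}(\sqrt{\mu}x)$ so that $\norm{\alpha_{\mu,V}}_{H^1}^2\leq\mu^{3/2}\norm{\alpha_{1,\sqrt{\mu}V_{\sqrt{\mu}}}}_{H^1}^2$, and then proves a $\mu$-uniform coercivity bound $\mathcal{F}^{1,\sqrt{\mu}V_{\sqrt{\mu}}}(\alpha)\geq\eps\norm{\alpha}_{H^1}^2-A$ for the rescaled functional, where $A=\frac{1}{4}\int[\eps p^2-1-\eps]_-\ud p$ is finite precisely because of the pointwise constraint $|\hat\alpha|\leq\frac{1}{2}$. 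You instead stay in the original variables and close a short bootstrap between $\norm{\alpha}_{L^2}$ and $\norm{\nabla\alpha}_{L^2}$: the variational inequality $\mathcal{F}^{\mu,V}(\alpha)\leq 0$ combined with $1-\sqrt{1-4x}\geq 2x$, H\"older and Sobolev gives $(1-c)\norm{\nabla\alpha}_{L^2}^2\leq\mu\norm{\alpha}_{L^2}^2$, and the momentum cutoff at $R\sim\sqrt{\mu}$ together with $|\hat\alpha|\leq\frac{1}{2}$ then yields $\norm{\alpha}_{L^2}^2\leq C\mu^{3/2}$ and hence $\norm{\nabla\alpha}_{L^2}^2\leq C\mu^{5/2}$. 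Both proofs rest on exactly the same two inputs --- Sobolev's inequality with $\norm{V}_{L^{3/2}}<S_3$ to dominate the potential term by a fraction of the kinetic energy, and the constraint $|\hat\alpha|\leq\frac{1}{2}$ to handle the degeneracy of the weight $|p^2-\mu|$ near the Fermi sphere --- and your choice $R\sim\sqrt{\mu}$ is the explicit counterpart of the paper's rescaling by the Fermi momentum. What your version buys is a self-contained argument that makes the separate bounds $\norm{\alpha}_{L^2}^2=O(\mu^{3/2})$ and $\norm{\nabla\alpha}_{L^2}^2=O(\mu^{5/2})$ visible; what the paper's version buys is consistency with the scaling framework it exploits throughout the rest of the proof.
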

\begin{proof}
With the scaling argument above we compute
\[
  \norm{\alpha_{\mu, V}}_{H^1}^2 = \int |\hat\alpha_{\mu, V}(p)|^2 \left(1 + p^2\right) \ud p
    = \mu^{3/2} \int |\hat \alpha_{1, \sqrt{\mu}V_{\sqrt{\mu}}}(q)|^2 \left(1 + \mu q^2\right) \ud q
    \leq \mu^{3/2} \norm{ \alpha_{1, \sqrt{\mu}V_{\sqrt{\mu}}} }_{H^1}^2.
\]
We now show, that this latter norm is bounded uniformly in $\mu$.

Let $\lambda = \frac{S_3}{\norm{V}_{L^{3/2}}} > 1$. 
Then, as $\norm{\sqrt{\mu} V_{\sqrt{\mu}}}_{L^{3/2}} = \norm{V}_{L^{3/2}}$ 
it follows that $\frac{p^2}{\lambda} + \sqrt{\mu} V_{\sqrt{\mu}} \geq 0$ by Sobolev's inequality, see \cite[section. 11.3]{analysis}. 
Thus we may bound for any $\alpha$, 
\begin{align*}
 \mathcal{F}^{1, \sqrt{\mu}V_{\sqrt\mu}}(\alpha)
  & \geq \int (p^2 - 1) |\hat \alpha(p)|^2 \ud p + \int \sqrt{\mu}V_{\sqrt{\mu}}(x)|\alpha(x)|^2 \ud x
  \\
  & = \longip{\alpha}{\frac{p^2}{\lambda} + \sqrt{\mu}V_{\sqrt{\mu}}}{\alpha} + \int \left(2 \eps p^2 - 1\right) |\hat \alpha(p)|^2 \ud p
  \\
  & \geq \eps \int |\hat \alpha(p)|^2 (1 + p^2) \ud p
    +  \int (\eps p^2 - \eps - 1) |\hat \alpha(p)|^2 \ud p
  \\
  & \geq \eps \norm{\alpha}_{H^1}^2 - A,
\end{align*}
where we introduced $\eps = \frac{1}{2} - \frac{1}{2\lambda} > 0$ and 
$A = \frac{1}{4} \int \left[\eps p^2 - 1 - \eps\right]_{-} \ud p < \infty$.
Since $\mathcal{F}^{1, \sqrt\mu V_{\sqrt{\mu}}}(0) = 0$ we get for the minimiser that 
$\norm{\alpha_{1, \sqrt{\mu} V_{\sqrt{\mu}}}}_{H^1}$ is bounded
uniformly in $\mu$. We conclude that 
\[
  \norm{\alpha_{\mu, V}}_{H^1} \leq C \mu^{3/4}. 
  \qedhere
\]
\end{proof}

\begin{prop}\label{prop.small.p.vs.large.p}
For small enough $\mu$, the minimiser satisfies 
\[
	\norm{ \hat \alpha_{\mu, V} 1_{\{|p| > \eps\}} }_{L^{3/2}} \leq C \norm{ \hat \alpha_{\mu, V} 1_{\{|p| \leq \eps\}}}_{L^1}
\]
for a small $\eps > 0$ and a constant $C$, both independent of $\mu$.
\end{prop}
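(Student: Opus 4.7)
\textbf{The plan is} to use the Euler-Lagrange equation $E(p)\hat\alpha(p) = -\widehat{V\alpha}(p)$ together with the elementary bound $E(p) \geq p^2/2$, valid on $\{|p|>\eps\}$ once $\mu \leq \eps^2/2$. Combined, these give the pointwise estimate $|\hat\alpha(p)| \leq (2/p^2)|\widehat{V\alpha}(p)|$ on $\{|p|>\eps\}$. I would then split $\alpha = \alpha_< + \alpha_>$ according to the Fourier cutoff at $\eps$ (so that $\widehat{\alpha_<} = \hat\alpha \cdot 1_{|p|\leq\eps}$ and $\widehat{\alpha_>} = \hat\alpha \cdot 1_{|p|>\eps}$), giving on $\{|p|>\eps\}$
\[
  |\hat\alpha(p)| \leq \frac{2}{p^2}|\widehat{V\alpha_<}(p)| + \frac{2}{p^2}|\widehat{V\alpha_>}(p)|.
\]
The strategy is to control the first piece by $\norm{\hat\alpha\cdot 1_{|p|\leq\eps}}_{L^1}$ and absorb the second into the left-hand side.

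For the driving term, I would use that $\widehat{\alpha_<}$ has compact support, so that $\norm{\alpha_<}_{L^\infty} \leq (2\pi)^{-3/2}\norm{\hat\alpha\cdot 1_{|p|\leq\eps}}_{L^1}$. H\"older (using $V\in L^{3/2}$) gives $\norm{V\alpha_<}_{L^{3/2}} \leq C\norm{\hat\alpha\cdot 1_{|p|\leq\eps}}_{L^1}$; Hausdorff--Young promotes this to $\widehat{V\alpha_<} \in L^3$; and a final H\"older with $1_{|p|>\eps}/p^2 \in L^3(\R^3)$ (whose norm is of order $\eps^{-1}$) yields $\norm{(1/p^2)\widehat{V\alpha_<}\cdot 1_{|p|>\eps}}_{L^{3/2}} \leq C_\eps\norm{\hat\alpha\cdot 1_{|p|\leq\eps}}_{L^1}$, of the desired form. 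For the coupling term, I would view $T\colon f \mapsto -(1_{|p|>\eps}/E)\widehat{V\mathcal{F}^{-1}(f)}$ as a linear operator so that the identity reads $(I-T)(\hat\alpha\cdot 1_{|p|>\eps}) = -(1_{|p|>\eps}/E)\widehat{V\alpha_<}$, and the aim is to show that $T$ is a strict contraction on a suitable function space, so that $I-T$ is invertible with uniformly bounded inverse. Contractivity is where the hypothesis $\norm{V}_{L^{3/2}}<S_3$ plays a decisive role: testing $Tf$ against a dual function $h$ supported on $\{|p|>\eps\}$, Plancherel converts the pairing into $\int Vg\,\psi_h\,dx$ with $g=\mathcal{F}^{-1}(f)$ and $\psi_h=\mathcal{F}^{-1}(h\cdot 1_{|p|>\eps}/E)$; then Cauchy--Schwarz with weight $|V|^{1/2}$ together with two applications of the Sobolev inequality
\[
  \norm{|V|^{1/2}u}_{L^2}^2 \leq \frac{\norm{V}_{L^{3/2}}}{S_3}\norm{\nabla u}_{L^2}^2
\]
produce the crucial prefactor $\norm{V}_{L^{3/2}}/S_3 < 1$, with the residual kinetic norms traded back using $|p|/E \leq 2/|p|$ on $\{|p|>\eps\}$ and integrability of $|p|^{-4}$ there.

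The main obstacle is to perform this bookkeeping so that the contractive constant $\norm{V}_{L^{3/2}}/S_3 < 1$ genuinely survives the passage from the Sobolev norms of $g$ and $\psi_h$ back down to the ambient $L^{3/2}$ and $L^3$ norms appearing in $T$. Concretely, one must reconcile the single exponent $L^{3/2}$ of the proposition with the natural integrabilities of $\hat\alpha$ (which lies in $L^{6/5}\cap L^2$ via $\alpha\in L^2\cap L^6$ and the previous proposition) by an appropriate interpolation, all while preserving the factor $<1$. Once this is achieved, inverting $I-T$ and combining with the bound on the driving term gives $\norm{\hat\alpha\cdot 1_{|p|>\eps}}_{L^{3/2}} \leq (1-c)^{-1}C_\eps\norm{\hat\alpha\cdot 1_{|p|\leq\eps}}_{L^1}$ for some $c<1$ independent of $\mu$, which proves the proposition for all sufficiently small $\mu$.
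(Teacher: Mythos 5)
Your proposal has a genuine gap exactly where you flag the ``main obstacle'': the contraction property of $T$ on $L^{3/2}(\{|p|>\eps\})$ is never established, and as set up it cannot be. The hypothesis $\norm{V}_{L^{3/2}}<S_3$ enters through Sobolev's inequality, which controls $\norm{|V|^{1/2}u}_{L^2}$ by $\norm{\nabla u}_{L^2}=\norm{\,|p|\hat u\,}_{L^2}$; but $\norm{\,|p|\hat u\,}_{L^2}$ is \emph{not} dominated by $\norm{\hat u}_{L^{3/2}}$ for functions supported at high frequency (test on $\hat u=1_{\{R<|p|<R+1\}}$), so the duality/Plancherel chain you describe does not close back to an $L^{3/2}\to L^{3/2}$ bound, let alone one with constant below $1$ --- each passage through Hausdorff--Young and H\"older degrades the sharp Sobolev constant. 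The natural space on which $T$ \emph{is} a uniform contraction is the weighted space with norm $\norm{\,|p|f\,}_{L^2}$ restricted to $\{|p|>\eps\}$: there one gets $\norm{\,|p|Tf\,}_{L^2}\leq (1-\delta)^{-1}\frac{\norm{V}_{L^{3/2}}}{S_3}\norm{\,|p|f\,}_{L^2}$ using $E\geq p^2-\mu\geq(1-\delta)p^2$ on $|p|>\eps$ for small $\mu$, and only at the very end does one convert to $L^{3/2}$ via $\norm{\hat g 1_{\{|p|>\eps\}}}_{L^{3/2}}\leq C_\eps\norm{\,|p|\hat g 1_{\{|p|>\eps\}}\,}_{L^2}$ (Cauchy--Schwarz against $(1+p^2)^{-3/4}$, as in the paper). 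With that repair your route (gap equation plus Neumann series for $I-T$) would work; as written, the decisive step is missing.

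For comparison, the paper avoids operator inversion entirely and argues variationally: it uses $\mathcal{F}^{\mu,V}(\alpha_{\mu,V})\leq 0$, chooses $\eps$ so that $\hat V\leq\frac{1}{2}\hat V(0)<0$ on $\{|p|\leq 2\eps\}$, splits the interaction double integral into low--low, low--high and high--high momentum blocks, absorbs the high--high block into the nonnegative form $\frac{p^2}{\lambda}+V$ (this is where $\norm{V}_{L^{3/2}}<S_3$ is used, cleanly, in the $L^2$ quadratic-form setting), and is left with a quadratic polynomial inequality in $\norm{\hat\alpha 1_{\{|p|>\eps\}}}_{L^{3/2}}$ whose coefficients involve $\norm{\hat\alpha 1_{\{|p|\leq\eps\}}}_{L^1}$. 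That argument sidesteps the norm-reconciliation issue because coercivity in $L^{3/2}$ is extracted from the $(1-\frac{1}{\lambda})p^2$ leftover only after the sign information has been exploited in $L^2$.
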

\begin{proof}
By the continuity of $\hat V$ we may find $\eps > 0$ such that
$ 2 \hat V(0) \leq \hat V(p) \leq \frac{1}{2} \hat V(0) < 0 $ for all $|p| \leq 2\eps$.
Let $\lambda = \frac{S_3}{\norm{V}_{L^{3/2}}} > 1$. Then $\frac{p^2}{\lambda} + V \geq 0$ and so
for the minimiser $\alpha = \alpha_{\mu, V}$ we compute
\begin{eqnarray*} 
 \mathcal{F}^{\mu, V}(\alpha)
  & = & \frac{1}{2} \int |p^2 - \mu| \left(1 - \sqrt{1 - 4\hat \alpha(p)^2}\right) \ud p
    + \int V(x) |\alpha(x)|^2 \ud x
\\
  & \geq & \int_{|p| > \eps} \left(p^2 - \mu\right) \hat \alpha(p)^2 \ud p
    + \frac{1}{(2\pi)^{3/2}} \iint \hat \alpha(p) \hat V(p - q) \hat \alpha(q) \ud p \ud q
\\
  & = & \int_{|p| > \eps} \left(p^2 - \mu\right) \hat \alpha(p)^2 \ud p
    + \frac{1}{(2\pi)^{3/2}} \left[
      \int_{|p| \leq \eps} \int_{|q|\leq \eps} \hat \alpha(p) \hat V(p - q) \hat \alpha(q) \ud p \ud q
      \right.
\\
  & & 
      \left.
      + 2 \int_{|p| \leq \eps|}\int_{|q| > \eps} \hat \alpha(p) \hat V(p - q) \hat \alpha(q) \ud p \ud q
      + \int_{|p| > \eps|}\int_{|q| > \eps} \hat \alpha(p) \hat V(p - q) \hat \alpha(q) \ud p \ud q
      \right]
\\
  & \geq & \longip{\hat \alpha 1_{\{|p| > \eps\}}}{\frac{p^2}{\lambda} + V}{\hat \alpha 1_{\{|p| > \eps\}}}
    + \int_{|p| > \eps} \left( \left(1 - \frac{1}{\lambda}\right) p^2 - \mu\right) \hat \alpha(p)^2 \ud p
\\
  & & 
      + \frac{1}{(2\pi)^{3/2}}
      \left[
        2 \hat V(0) \norm{ \hat \alpha 1_{\{|p| \leq \eps\}}}_{L^1}^2
        + 2 \int_{|p| \leq \eps} \int_{|q| > \eps} \hat \alpha(p) \hat V(p - q) \hat \alpha(q) \ud p \ud q
      \right]
\\
  & \geq & \int_{|p| > \eps} \left( \left(1 - \frac{1}{\lambda}\right) p^2 - \mu\right) \hat \alpha(p)^2 \ud p
\\
  & & 
      + \frac{1}{(2\pi)^{3/2}}
      \left[
        2 \hat V(0) \norm{ \hat \alpha 1_{\{|p| \leq \eps\}}}_{L^1}^2
        + 2 \int_{|p| \leq \eps} \int_{|q| > \eps} \hat \alpha(p) \hat V(p - q) \hat \alpha(q) \ud p \ud q
      \right].
\end{eqnarray*} 
We now bound the two remaining integrals.
For the first integral we do the following
\[
  \int_{|p| > \eps} \left( \left(1 - \frac{1}{\lambda}\right) p^2 - \mu\right) \hat \alpha(p)^2 \ud p
  \geq c \int_{|p| > \eps} \hat \alpha(p)^2 \left(1 + p^2\right) \ud p
  \geq c \norm{\hat \alpha 1_{\{|p| > \eps\}}}_{L^{3/2}}^2,
\]
by the bound $\norm{\hat g}_{L^{3/2}} \leq C\norm{g}_{H^1}$, valid for any function $g$. To see this,
simply write
\begin{align*}
\norm{\hat g}_{L^{3/2}}^{3/2}
  & = \int |\hat g(p)|^{3/2} \frac{(1 + p^2)^{3/4}}{(1 + p^2)^{3/4}} \ud p
    \leq \left(\int |\hat g(p)|^2 \left(1 + p^2 \right) \ud p \right)^{3/4} 
      \left(\int \frac{1}{(1 + p^2)^{3}} \ud p\right)^{1/4}.
\end{align*}
For the double-integral we use the Young and the Hausdorff-Young inequalities \cite[Theorems 4.2 and 5.7]{analysis}.
This gives
\begin{align*}
\abs{\int_{|p| \leq \eps} \int_{|q| > \eps} \hat \alpha(p) \hat V(p - q) \hat \alpha(q) \ud p \ud q}
  & \leq C \norm{\hat \alpha 1_{\{|p| \leq \eps\}}}_{L^1} \norm{\hat V}_{L^3} \norm{\hat \alpha 1_{\{|p| > \eps\}}}_{L^{3/2}}
  \\
  & \leq C \norm{V}_{L^{3/2}} \norm{\hat \alpha 1_{\{|p| \leq \eps\}}}_{L^1} \norm{\hat \alpha 1_{\{|p| > \eps\}}}_{L^{3/2}}.
\end{align*}
Combining all this we get the bound
\begin{align*}
\mathcal{F}^{\mu, V}(\alpha)
  & \geq 
    c \norm{\hat \alpha 1_{\{|p| > \eps\}}}_{L^{3/2}}^2
    - C_1 \norm{\hat \alpha 1_{\{|p| > \eps\}}}_{L^{3/2}} \norm{\hat \alpha 1_{\{|p| \leq \eps\}}}_{L^1}
    - C_2 \norm{\hat \alpha 1_{\{|p| \leq \eps\}}}_{L^1}^2
\end{align*}
where we absorbed the factors of $V$ into the constants $C_1, C_2 > 0$.
The right-hand-side above is a second degree polynomial in $\norm{\hat \alpha 1_{\{|p| > \eps\}}}_{L^{3/2}}$.
Moreover, the minimiser $\alpha = \alpha_{\mu, V}$ satisfies $\mathcal{F}^{\mu, V}(\alpha) \leq 0$. 
We conclude that $\norm{\hat \alpha 1_{\{|p| > \eps\}}}_{L^{3/2}}$ is between
the two roots of the second degree polynomial. In particular
\begin{align*}
\norm{\hat \alpha 1_{\{|p| > \eps\}}}_{L^{3/2}}
  & \leq 
    \frac{C_1 \norm{\hat \alpha 1_{\{|p| \leq \eps\}}}_{L^1} 
      + \sqrt{ C_1^2 \norm{\hat \alpha 1_{\{|p| \leq \eps\}}}_{L^1}^2 
      + 4 c C_2 \norm{\hat \alpha 1_{\{|p| \leq \eps\}}}_{L^1}^2 }}{
    2c
    }
\\
  & \leq 
    C \norm{\hat \alpha 1_{\{|p| \leq \eps\}}}_{L^1}. \qedhere
\end{align*}
\end{proof}

\noindent
We now bound $\Delta_{\mu, V} = - 2 \widehat{ V \alpha_{\mu, V}} = - 2 (2\pi)^{-3/2} \hat V * \hat \alpha_{\mu, V}$.
Now, $\hat V\leq 0$, and so $\Delta_{\mu, V} \geq 0$. 
By the BCS gap equation it follows that even $\Delta_{\mu, V} > 0$, see \cite{hainzl.seiringer.08}.
\begin{comment}
First, we need the following.
\begin{prop}
Let $g\in H^1(\R^n)$, then $\norm{\hat g}_{L^r} \leq C \norm{g}_{H^1}$ for all $\frac{2n}{n + 2} < r \leq 2$.
\end{prop}
\noindent
In our case, of dimension $n = 3$, we thus get the condition $\frac{6}{5} < r \leq 2$. 
\begin{proof}
For $r = 2$ this is clear. For $r < 2$ we compute
\begin{align*}
\norm{\hat g}_{L^r}^r
  & = \int |\hat g(p)|^r \frac{(1 + p^2)^{r/2}}{(1 + p^2)^{r/2}} \ud p
%  \\
%  & 
    \leq \left(\int |\hat g(p)|^2 \left(1 + p^2 \right) \ud p \right)^{r/2} 
      \left(\int \frac{1}{(1 + p^2)^{\frac{r}{2-r}}} \ud p\right)^{1 - r/2}.
%  \\
%  & 
%    \leq C \norm{f}_{H^1}^r
\end{align*}
Since the second integral is finite for $\frac{2r}{2 - r} > n$ we conclude the desired. 
\end{proof}

\noindent
With this we may bound $\Delta_{\mu, V}$.
\end{comment}
\begin{prop}\label{prop.bound.delta.energy.gap.low.density}
The function $\Delta_{\mu, V}$ satisfies
\[
  \Delta_{\mu, V}(p) \leq C \mu^{3/4}
  \qquad \textnormal{and}
  \qquad
  |\Delta_{\mu, V}(p') - \Delta_{\mu, V}(p)| \leq C \mu^{3/4} |p' - p|.
\] 
\end{prop}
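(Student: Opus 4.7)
The plan is to work directly from the representation
\[
\Delta_{\mu,V}(p) = -\frac{2}{(2\pi)^{3/2}} \int V(x)\alpha_{\mu,V}(x) e^{-ipx}\ud x,
\]
which turns both estimates into bounds on weighted $L^1$-integrals of $V\alpha_{\mu,V}$. For the pointwise bound the weight is trivial, and for the Lipschitz bound the elementary inequality $|e^{-ip'\cdot x} - e^{-ip\cdot x}| \le |p'-p||x|$ lets me take the weight to be $|x|$.

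For the uniform bound I would apply H\"older and get
\[
|\Delta_{\mu,V}(p)| \le C\norm{V\alpha_{\mu,V}}_{L^1} \le C\norm{V}_{L^{3/2}}\norm{\alpha_{\mu,V}}_{L^3},
\]
then pass from $L^3$ to $H^1$ via the Sobolev embedding $H^1(\R^3)\hookrightarrow L^6(\R^3)$ combined with interpolation against $L^2$ to get $\norm{\alpha_{\mu,V}}_{L^3} \le C\norm{\alpha_{\mu,V}}_{H^1}$. The previous proposition supplies $\norm{\alpha_{\mu,V}}_{H^1} \le C\mu^{3/4}$, yielding the claim uniformly in $p$.

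For the Lipschitz bound the analogous computation reduces matters to estimating $\int |V(x)||x||\alpha_{\mu,V}(x)|\ud x$, which H\"older controls by $\norm{|x|V}_{L^{3/2}}\norm{\alpha_{\mu,V}}_{L^3}$. The hypothesis $V(x)(1+|x|) \in L^{3/2}$ guarantees $|x|V \in L^{3/2}$, and the same $L^3 \hookleftarrow H^1$ step then closes the argument.

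Neither estimate is delicate; both ride on the $H^1$ bound for $\alpha_{\mu,V}$ proved in the previous proposition together with standard H\"older--Sobolev inequalities. The one design point worth flagging is that the Lipschitz estimate is precisely where the weighted integrability $|x|V \in L^{3/2}$ is used, which is why the hypothesis on $V$ was stated with the weight $(1+|x|)$ rather than merely $V \in L^{3/2}$; without this one would only obtain H\"older continuity of $\Delta_{\mu,V}$ rather than the Lipschitz bound needed downstream.
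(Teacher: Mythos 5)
Your proof is correct. It rests on the same two inputs as the paper's argument --- the bound $\norm{\alpha_{\mu,V}}_{H^1}\leq C\mu^{3/4}$ from the preceding proposition and the weighted integrability $|x|V\in L^{3/2}$ --- but the technical route is genuinely different: you stay in position space, applying H\"older directly to $\int V(x)\alpha_{\mu,V}(x)e^{-ipx}\ud x$ with exponents $(3/2,3)$ and then passing from $L^3$ to $H^1$ via Sobolev embedding and interpolation, whereas the paper stays entirely in Fourier space, writing $\Delta_{\mu,V}=-2(2\pi)^{-3/2}\,\hat V*\hat\alpha_{\mu,V}$, bounding the convolution by Young's inequality, controlling $\norm{\hat V}_{L^3}$ by Hausdorff--Young and $\norm{\hat\alpha_{\mu,V}}_{L^{3/2}}$ by the elementary estimate $\norm{\hat g}_{L^{3/2}}\leq C\norm{g}_{H^1}$ established just before the proposition; for the Lipschitz part the paper likewise treats $\hat V(p'-\cdot)-\hat V(p-\cdot)$ as the Fourier transform of $(e^{-ip'x}-e^{-ipx})V(-x)$ and applies Hausdorff--Young, which is the Fourier-side mirror of your pointwise inequality $|e^{-ip'x}-e^{-ipx}|\leq|p'-p||x|$. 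Your version avoids Hausdorff--Young and Young's convolution inequality altogether at the negligible cost of invoking the Sobolev embedding $H^1(\R^3)\hookrightarrow L^6(\R^3)$; the paper's version has the minor advantage of reusing machinery ($\norm{\hat g}_{L^{3/2}}\leq C\norm{g}_{H^1}$, Hausdorff--Young) that it needs elsewhere anyway. Your closing remark correctly identifies where the hypothesis $V(x)(1+|x|)\in L^{3/2}$ enters.
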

\begin{proof}
We compute
\[
\Delta_{\mu, V}(p)
%  & 
  \leq \frac{2}{(2\pi)^{3/2}} \int \abs{ \hat V(p-q) } \hat \alpha_{\mu, V}(q) \ud q
%  \\
%  & 
  \leq C \norm{\hat V}_{L^3} \norm{\hat \alpha_{\mu, V}}_{L^{3/2}}
%  \\
%  & 
  \leq C \norm{V}_{L^{3/2}} \norm{\alpha_{\mu, V}}_{H^1}
%  \\
%  & 
  \leq C \mu^{3/4}
\]
by the Hausdorff-Young inequality \cite[Theorem 5.7]{analysis} and 
the fact that $\norm{\hat g}_{L^{3/2}} \leq C \norm{g}_{H^1}$.
The bound for the difference is similar, using that
\begin{align*}
  \norm{\hat V(p' - \cdot) - \hat V(p - \cdot)}_{L^3}
    & \leq C \left(\int \abs{ e^{-ip'x} - e^{-ipx}}^{3/2} |V(x)|^{3/2} \ud x\right)^{2/3}
    \\
    & \leq C \left(\int \abs{p' - p}^{3/2} |x|^{3/2} |V(x)|^{3/2} \ud x\right)^{2/3}
    \\
    & = C \norm{V |\cdot|}_{L^{3/2}} |p' - p|,
\end{align*}
where we used that $\hat V(p' - \cdot) - \hat V(p - \cdot)$ is the Fourier transform of 
$\left(e^{-ip'x} - e^{-ipx}\right) V(-x)$. 
\end{proof}

\noindent
For the sake of simplifying notation, we will just write $\Delta$ for the function $\Delta_{\mu, V}$
from now on. 
With this bound on $\Delta$ we get some control over $m_\mu(\Delta)$.

By computing the spherical part of the integral, splitting the integral according to $p^2 < 2\mu$ and $p^2 > 2\mu$,
and using the substitutions $s = \frac{\mu - p^2 }{\mu}$ and $s = \frac{p^2 - \mu}{\mu}$ we may rewrite $m_\mu(\Delta)$ as 
\newlength{\templengap}
\settowidth{\templengap}{$\displaystyle 
  = \frac{\sqrt{\mu}}{4\pi^2} \,
$}% --------
\begin{align*}
m_\mu(\Delta)
  & = 
    \frac{\sqrt{\mu}}{4\pi^2} 
      \left[
        \int_0^1 \frac{\sqrt{1 - s} - 1}{\sqrt{s^2 + \left(\frac{\Delta( \sqrt{\mu} \sqrt{1 - s})}{\mu}\right)^2}} 
          + \frac{\sqrt{1 + s} - 1}{\sqrt{s^2 + \left(\frac{\Delta( \sqrt{\mu} \sqrt{1 + s})}{\mu}\right)^2}} 
      \right.
%  \\
%  & &
%    \hspace*{\templengap}
    - \frac{1}{\sqrt{1 - s}} - \frac{1}{\sqrt{1 + s}} \ud s
  \\
  & \hspace*{\templengap}
    + \int_0^1 \frac{1}{\sqrt{s^2 + \left(\frac{\Delta( \sqrt{\mu} \sqrt{1 - s})}{\mu}\right)^2}}
        + \frac{1}{\sqrt{s^2 + \left(\frac{\Delta( \sqrt{\mu} \sqrt{1 + s})}{\mu}\right)^2}} \ud s
  \\
  & \hspace*{\templengap}
      \left.
        + \int_1^\infty \frac{\sqrt{1 + s}}{\sqrt{s^2 + \left(\frac{\Delta( \sqrt{\mu} \sqrt{1 + s})}{\mu}\right)^2}} 
          - \frac{1}{\sqrt{1 + s}} \ud s
      \right].
\end{align*}
Here by $\Delta(\sqrt{\mu}\sqrt{1 \pm s})$ we mean the value of $\Delta$ on a sphere 
with the given radius. Since $\Delta$ is radial, this is well-defined.
We now claim that
\begin{prop}\label{prop.energy.gap.low.density.expansion.m}
In the limit $\mu \to 0$ the value $m_\mu(\Delta)$ satisfies
\begin{multline*}
m_\mu(\Delta)
%  & = &
   = \frac{\sqrt{\mu}}{4\pi^2} 
      \left[
        \int_0^1 \frac{\sqrt{1 - s} - 1}{\sqrt{s^2 + \left(\frac{\Delta( \sqrt{\mu})}{\mu}\right)^2}} 
          + \frac{\sqrt{1 + s} - 1}{\sqrt{s^2 + \left(\frac{\Delta( \sqrt{\mu})}{\mu}\right)^2}} 
    - \frac{1}{\sqrt{1 - s}} - \frac{1}{\sqrt{1 + s}} \ud s
      \right.
  \\
%  & & \hspace*{\templengap}
      \left.
    + \int_0^1 \frac{2}{\sqrt{s^2 + \left(\frac{\Delta( \sqrt{\mu})}{\mu}\right)^2}} \ud s
    + \int_1^\infty \frac{\sqrt{1 + s}}{\sqrt{s^2 + \left(\frac{\Delta( \sqrt{\mu})}{\mu}\right)^2}} 
          - \frac{1}{\sqrt{1 + s}} \ud s
    + o(1)
      \right].
\end{multline*}
\end{prop}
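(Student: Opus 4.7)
The plan is to replace $\Delta(\sqrt{\mu}\sqrt{1\pm s})$ by the constant $\Delta(\sqrt{\mu})$ inside each of the three integrals defining $m_\mu(\Delta)$ and show that the total error is $o(1)$ inside the bracket. Writing $d_\pm(s) = \Delta(\sqrt{\mu}\sqrt{1\pm s})/\mu$ and $d_0 = \Delta(\sqrt{\mu})/\mu$, the Lipschitz bound from Proposition~\ref{prop.bound.delta.energy.gap.low.density} gives
\[
|d_\pm(s) - d_0| \leq C\mu^{1/4}|\sqrt{1\pm s} - 1|,
\]
which is at most $C\mu^{1/4} s$ on $[0,1]$ and at most $C\mu^{1/4}\sqrt{1+s}$ on $[1,\infty)$. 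Writing $|a^{-1/2} - b^{-1/2}| = |b-a|/[\sqrt{ab}(\sqrt{a}+\sqrt{b})]$ with $a = s^2+d_\pm^2$, $b = s^2+d_0^2$, using $|d_\pm^2 - d_0^2| \leq |d_\pm - d_0|(d_\pm + d_0)$ and $d_\pm+d_0 \leq \sqrt{s^2+d_\pm^2}+\sqrt{s^2+d_0^2}$, yields the key pointwise error bound
\[
\left|\frac{1}{\sqrt{s^2+d_\pm(s)^2}} - \frac{1}{\sqrt{s^2+d_0^2}}\right| \leq \frac{C\mu^{1/4}|\sqrt{1\pm s} - 1|}{\sqrt{s^2+d_\pm^2}\sqrt{s^2+d_0^2}}.
\]

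For the first integral the extra prefactor $|\sqrt{1\pm s} - 1| \leq s$ vanishes linearly at the origin, so the integrand is bounded by $C\mu^{1/4}s^2/(s \cdot s) = C\mu^{1/4}$, producing an $O(\mu^{1/4}) = o(1)$ contribution. For the third integral I would split $[1,\infty)$ into a bounded piece $[1, M]$ and a tail: on $[1, M]$ the bound $d_+ + d_0 \leq 2d_0 + C\mu^{1/4}$ together with $\sqrt{s^2+d^2} \geq 1$ gives an integrand of size $O(\mu^{1/2})$, whereas on $[M, \infty)$ the improved estimate $|d_+^2 - d_0^2| \leq C\mu^{1/2}(1+s)$ combined with $\sqrt{s^2+d^2} \geq s$ produces an integrand of size $\mu^{1/2}/s^{3/2}$, integrable to $O(\mu^{1/2}M^{-1/2})$. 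Both contributions are $o(1)$.

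The main obstacle is the middle integral. Here there is no prefactor to tame the $s=0$ logarithmic singularity, and naive integration of the pointwise bound yields only $O(\mu^{1/4}\log(1/d_0))$, which need not be $o(1)$ when $d_0$ is expected to be exponentially small in $1/\sqrt{\mu}$. To upgrade the estimate, I plan to exploit the near-antisymmetry $\sqrt{1+s} - 1 \approx -(\sqrt{1-s}-1)$ for small $s$: Taylor-expanding $1/\sqrt{s^2+d^2}$ in $d$ around $d_0$, the leading linear-in-$(d_\pm - d_0)$ terms in the sum $1/\sqrt{s^2+d_-^2} + 1/\sqrt{s^2+d_+^2} - 2/\sqrt{s^2+d_0^2}$ approximately cancel, leaving only a quadratic remainder of order $\mu^{1/2}s^2/(s^2+d_0^2)^{3/2}$. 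A further splitting of the $s$-range at some intermediate scale $s \sim \mu^{\alpha}$ should then yield an $o(1)$ total error uniformly in $d_0$. The genuine difficulty is that the plain Lipschitz bound on $\Delta$ is borderline insufficient, so the $\pm$ symmetry must be used in an essential way (and, if necessary, upgraded via the BCS equation to a finer regularity statement for $\Delta$ near the Fermi sphere).
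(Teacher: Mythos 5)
There is a genuine gap, and it sits exactly where you flag it: the middle integral. Your diagnosis that the Lipschitz bound of \cref{prop.bound.delta.energy.gap.low.density} alone gives only $O(\mu^{1/4}\log(1/d_0))$ is right, but the proposed rescue via the antisymmetry of $\sqrt{1\pm s}-1$ does not close it. Even granting exact cancellation of the linear Taylor terms, the quadratic remainder is of size $(d_\pm-d_0)^2\,\sup_\xi|\partial_d^2(s^2+\xi^2)^{-1/2}|\lesssim \mu^{1/2}s^2\cdot s^{-3}=\mu^{1/2}/s$ (without a lower bound on $\xi$ you cannot beat $(s^2+\xi^2)^{-3/2}\le s^{-3}$), and integrating over $d_0\lesssim s\lesssim 1$ reproduces $\mu^{1/2}\log(1/d_0)$. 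Since the theorem forces $d_0=\Delta(\sqrt\mu)/\mu\sim e^{-\pi/(2\sqrt{\mu}|a|)}$, this is $O(1)$, not $o(1)$, and no intermediate splitting scale $s\sim\mu^\alpha$ helps: the bad region spans $\sim\mu^{-1/2}$ dyadic scales, each contributing $\sim\mu^{1/2}$.

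The missing ingredient is \cref{prop.small.p.vs.large.p}, which your argument never uses. The paper combines it with the convolution structure $\Delta=-2(2\pi)^{-3/2}\hat V*\hat\alpha_{\mu,V}$ and the fact that $\hat V\le\tfrac12\hat V(0)<0$ near the origin to prove the two-sided comparability $\Delta(p)\le C\,\Delta(\sqrt\mu)$ uniformly in $p$: both quantities are comparable to $\|\hat\alpha_{\mu,V}1_{\{|q|\le\eps\}}\|_{L^1}$, the high-momentum contribution being controlled by \cref{prop.small.p.vs.large.p}. This yields $|d_\pm(s)^2-d_0^2|\le C\mu^{1/4}s\,d_0$, with the crucial extra factor $d_0$ that your ``key pointwise error bound'' throws away when you cancel $(d_\pm+d_0)$ against $(\sqrt{s^2+d_\pm^2}+\sqrt{s^2+d_0^2})$. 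Keeping it, the error integrand is at most $C\mu^{1/4}d_0/[\sqrt{s^2+d_0^2}(s+\sqrt{s^2+d_0^2})]$, whose integral over $[0,1]$ is $O(1)$ uniformly in $d_0$ (substitute $s=d_0t$), so the middle-integral error is $O(\mu^{1/4})$ with no logarithm. Your first and third integrals are essentially the paper's dominated-convergence step, though on $[1,M]$ and in the tail you tacitly assume $d_0=O(1)$, whereas at this stage only $d_0\le C\mu^{-1/4}$ is known; that part is easily patched.
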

\begin{proof}
For the first and last integrals this follows by a dominated convergence argument. 
One considers the difference between the claimed value and the known value
and uses a dominated convergence argument to shows that this vanishes.
For the middle integral we use propositions \ref{prop.small.p.vs.large.p} and \ref{prop.bound.delta.energy.gap.low.density}.
The argument is as follows.

Define the function(s) $x(s) = \frac{\Delta(\sqrt{1 \pm s}\sqrt{\mu})}{\mu}$. We must then show that
\[
  \lim_{\mu \to 0}\int_0^1 \frac{1}{\sqrt{s^2 + x(s)^2}} - \frac{1}{\sqrt{s^2 + x(0)^2}} \ud s = 0.
\]
First, the function $\Delta$ satisfies (with $\eps > 0$ chosen from \cref{prop.small.p.vs.large.p})
\begin{align*}
  \Delta(p) 
    & = \frac{2}{(2\pi)^{3/2}}\int \hat V(p-q) \hat \alpha_{\mu, V}(q) \ud q
    \\
    & = \frac{2}{(2\pi)^{3/2}}\int_{|q| \leq \eps} \hat V(p-q) \hat \alpha_{\mu, V}(q) \ud q
      + \frac{2}{(2\pi)^{3/2}}\int_{|q| > \eps} \hat V(p-q) \hat \alpha_{\mu, V}(q) \ud q.
\end{align*}
This gives for $|p| = \sqrt{\mu}$ that
\begin{align*}
  |\Delta(\sqrt{\mu})|
    & = \frac{2}{(2\pi)^{3/2}}\int_{|q| \leq \eps} |\hat V(p-q)| \hat \alpha_{\mu, V}(q) \ud q
      + \frac{2}{(2\pi)^{3/2}}\int_{|q| > \eps} |\hat V(p-q)| \hat \alpha_{\mu, V}(q) \ud q
    \\
    & \geq \frac{1}{(2\pi)^{3/2}} |\hat V(0)| \norm{\hat \alpha_{\mu, V} 1_{\{|p| \leq \eps\}}}_{L^1}.
\end{align*}
Also, for any $|p| = \sqrt{1 \pm s}\sqrt{\mu}$ that
\begin{align*}
  |\Delta(p)|
    & = \frac{2}{(2\pi)^{3/2}}\int_{|q| \leq \eps} |\hat V(p-q)| \hat \alpha_{\mu, V}(q) \ud q
      + \frac{2}{(2\pi)^{3/2}}\int_{|q| > \eps} |\hat V(p-q)| \hat \alpha_{\mu, V}(q) \ud q
    \\
    & \leq \frac{4}{(2\pi)^{3/2}} |\hat V(0)| \norm{\hat \alpha_{\mu, V} 1_{\{|p| \leq \eps\}}}_{L^1}
      + \frac{2}{(2\pi)^{3/2}} \norm{\hat V}_{L^3} \norm{ \hat \alpha_{\mu, V} 1_{\{|p| > \eps\}}}_{L^{3/2}}
    \\
    & \leq C \norm{\hat \alpha_{\mu, V} 1_{\{|p| \leq \eps\}}}_{L^1} \leq C |\Delta(\sqrt{\mu})|,
\end{align*}
by the Hausdorff-Young inequality \cite[Theorem 5.7]{analysis} and \cref{prop.small.p.vs.large.p} above. 
Thus, the function(s) $x(s)$ satisfies the desired $|x(s)| \leq C |x(0)|$.
With this, we may now prove the desired convergence of integrals.
\begin{align*}
 \abs{\frac{1}{\sqrt{s^2 + x(s)^2}} - \frac{1}{\sqrt{s^2 + x(0)^2}}}
    & = \frac{\abs{x(s)^2 - x(0)^2}}{\sqrt{s^2 + x(s)^2}\sqrt{s^2 + x(0)^2}\left(\sqrt{s^2 + x(s)^2} + \sqrt{s^2 + x(0)^2}\right)}
    \\
    & \leq \frac{ C \mu^{1/4} s |x(0)|}{\sqrt{s^2 + x(s)^2}\sqrt{s^2 + x(0)^2} \left( s + \sqrt{s^2 + x(0)^2}\right)}
    \\
    & \leq C\mu^{1/4} \frac{|x(0)|}{\sqrt{s^2 + x(0)^2} \left(s + \sqrt{s^2 + x(0)^2}\right)}.
\end{align*}
Now, one may compute that
\[
  \int_0^1 \frac{|x(0)|}{\sqrt{s^2 + x(0)^2}\left(s + \sqrt{s^2 + x(0)^2}\right)} \ud s = O(1).
\]
This shows that
\[
  \int_0^1 \frac{1}{\sqrt{s^2 + x(s)^2}} - \frac{1}{\sqrt{s^2 + x(0)^2}} \ud s
    = O \left(\mu^{1/4}\right)
\]
vanishes as desired. We conclude the desired.
\end{proof}

\noindent
The remainder of this paper uses the methods of \cite{hainzl.seiringer.scat.length}. 
%We more or less just apply the methods of \cite{hainzl.seiringer.scat.length} to the operators here.
We decompose 
\[
  B_\Delta 
    := V^{1/2} \frac{1}{E} |V|^{1/2}
    = V^{1/2} \frac{1}{p^2} |V|^{1/2} + m_\mu(\Delta) \ket{V^{1/2}} \bra{|V|^{1/2}} + A_{\Delta, \mu},
\]  
where $A_{\Delta, \mu}$ is defined such that this holds. 
That is, its kernel is
\[
  A_{\Delta, \mu}(x,y) 
    = V(x)^{1/2} |V(y)|^{1/2} \frac{1}{2\pi^2} \int_0^\infty \left( \frac{\sin p |x-y|}{p |x-y|} - 1\right)
      \left(\frac{1}{E} - \frac{1}{p^2}\right) p^2 \ud p.
\] 
In order to see this, note that $\int_{S^2} e^{ipx} \ud p = 4\pi \frac{\sin |x|}{|x|}$.
The operator $B_\Delta$ is the Birman-Schwinger operator associated to $E + V$. 
One easily checks that $E + V$ has its lowest eigenvalue $0$, see \cite{hainzl.seiringer.08}.
(This follows from the fact that $\hat V\leq 0$ is negative and so the ground state of $E + V$ 
can be chosen to have non-negative Fourier transform. Hence it is not orthogonal to $\alpha_{\mu, V}$,
which is an eigenfunction with eigenvalue $0$.)
Thus $B_\Delta$ has $-1$ as its lowest eigenvalue.
\begin{prop}
In the limit $\mu \to 0$ the function $\Delta$ satisfies $\Delta(\sqrt{\mu}) = o(\mu)$.
\end{prop}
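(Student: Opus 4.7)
The plan is to use the Birman-Schwinger equation together with the rank-one decomposition of $B_\Delta$ to derive a scalar identity for $m_\mu(\Delta)$ in terms of the scattering length, and then compare with the asymptotic expansion in \cref{prop.energy.gap.low.density.expansion.m} via a contradiction argument.

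First, since $(E + V)\alpha_{\mu, V} = 0$ and $V = |V|^{1/2} V^{1/2}$, the vector $\psi = V^{1/2} \alpha_{\mu, V}$ satisfies $B_\Delta \psi = -\psi$. Inserting the decomposition gives
\[
  \left(1 + V^{1/2}\tfrac{1}{p^2}|V|^{1/2}\right)\psi + m_\mu(\Delta)\, \ip{|V|^{1/2}}{\psi}\, V^{1/2} + A_{\Delta, \mu}\psi = 0.
\]
The assumption $\norm{V}_{L^{3/2}} < S_3$ guarantees that $1 + V^{1/2}\tfrac{1}{p^2}|V|^{1/2}$ is invertible with bounded inverse. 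Applying this inverse, pairing with $|V|^{1/2}$, and using the definition of the scattering length yields
\[
  \ip{|V|^{1/2}}{\psi} \bigl(1 + 4\pi a\, m_\mu(\Delta)\bigr)
    = -\ip{|V|^{1/2}}{\left(1 + V^{1/2}\tfrac{1}{p^2}|V|^{1/2}\right)^{-1} A_{\Delta, \mu}\psi}.
\]
Note $\ip{|V|^{1/2}}{\psi} = \int V \alpha_{\mu, V}\ud x = -\tfrac{(2\pi)^{3/2}}{2}\Delta(0) \neq 0$ since $\Delta > 0$.

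The next task is to show that the right-hand side, divided by $\ip{|V|^{1/2}}{\psi}$, is $o(1)$ as $\mu \to 0$, which would give $m_\mu(\Delta) \to -\tfrac{1}{4\pi a} > 0$ (recall $a < 0$). Via Cauchy--Schwarz and the bounded inverse, this reduces to an operator estimate on $A_{\Delta, \mu}$, whose kernel features the factor $\tfrac{\sin p|x - y|}{p|x - y|} - 1$ vanishing at $p = 0$, together with the decay $\tfrac{1}{E} - \tfrac{1}{p^2} = O(p^{-4})$ at large $p$. Exploiting these, with $V(x)(1 + |x|) \in L^1 \cap L^{3/2}$ controlling the $|x - y|$ dependence, should deliver the required smallness.

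With $m_\mu(\Delta) \to -\tfrac{1}{4\pi a} \neq 0$ in hand, the claim follows by contradiction: if $\Delta(\sqrt\mu)/\mu \not\to 0$, then along a subsequence $\Delta(\sqrt\mu)/\mu \geq c > 0$, and \cref{prop.energy.gap.low.density.expansion.m} forces all three integrals to be uniformly bounded (the middle one by $\int_0^1 \tfrac{2}{\sqrt{s^2 + c^2}}\ud s$), giving $m_\mu(\Delta) = O(\sqrt{\mu}) \to 0$, a contradiction. The main obstacle is the operator estimate: one must show $\norm{A_{\Delta, \mu}\psi}$ decays strictly faster than $|\ip{|V|^{1/2}}{\psi}| \sim \Delta(0)$, which requires careful use of the vanishing of $\tfrac{\sin p|x-y|}{p|x-y|} - 1$ at small $p$, the decay of $\tfrac{1}{E} - \tfrac{1}{p^2}$ at large $p$, and the weighted integrability assumption on $V$.
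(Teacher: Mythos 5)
Your reduction to the Birman--Schwinger structure is sound (indeed $\psi = V^{1/2}\alpha_{\mu,V}$ satisfies $B_\Delta\psi = -\psi$, and your scalar identity is correct), and your final step --- deducing the proposition from $m_\mu(\Delta)\to -\tfrac{1}{4\pi a}\neq 0$ via \cref{prop.energy.gap.low.density.expansion.m} --- would work. But the route you choose to reach $m_\mu(\Delta)\to-\tfrac{1}{4\pi a}$ has a genuine gap, in fact two. First, the estimate you flag as the ``main obstacle'' is not merely hard, it is false: you want $\norm{A_{\Delta,\mu}\psi}$ (or the paired quantity, after Cauchy--Schwarz) to decay faster than $\abs{\ip{|V|^{1/2}}{\psi}}\sim\Delta(0)$. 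But $\Delta(0)\sim\Delta(\sqrt\mu)$ turns out to be \emph{exponentially} small in $\mu^{-1/2}$ --- that is the content of the theorem --- whereas every bound available on $\norm{A_{\Delta,\mu}}$ and on $\norm{\psi}$ is only polynomial in $\mu$; moreover $\norm{\psi}$ itself scales like a square root of $\Delta(0)$ (since $\hat\alpha\approx\Delta/2E$ concentrates on a shell of width $\sim\Delta/\sqrt\mu$), so the ratio $\norm{A}\norm{\psi}/\Delta(0)$ diverges. To extract a second factor of $\ip{|V|^{1/2}}{\psi}$ you would have to iterate the resolvent identity once more (as the paper does later, in \cref{eqn.energy.gap.low.density.scatt}, where the error terms are paired against the \emph{fixed} vectors $\ket{V^{1/2}}$, $\ket{|V|^{1/2}}$ rather than against $\psi$). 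Second, and more fundamentally, the unconditional smallness of $A_{\Delta,\mu}$ that your plan presupposes is circular at this stage: the dangerous part of its kernel is $\int_0^{\sqrt{2\mu}}\abs{\tfrac1E-\tfrac1{p^2}}p^{5/2}\ud p$, and controlling $1/E$ near the Fermi sphere requires a \emph{lower} bound on $\Delta$ there; without one, this integral is only bounded by $C\mu^{3/4}\log\tfrac{\mu}{\Delta(\sqrt\mu)}\sim C\mu^{1/4}m_\mu(\Delta)$, and one does not yet know that $m_\mu(\Delta)$ is bounded.

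The paper's proof sidesteps both problems by running the \emph{entire} argument under the contradiction hypothesis $\Delta(\sqrt\mu)\geq B\mu$: this hypothesis supplies exactly the lower bound $E\geq\Delta\geq B\mu$ needed to get $\norm{A_{\Delta,\mu}}_2\leq C\mu^{1/2}$, and it simultaneously forces $m_\mu(\Delta)=O(\mu^{1/4})$ from \cref{prop.energy.gap.low.density.expansion.m} (your observation about the middle integral). Hence under the hypothesis both perturbations of $V^{1/2}\tfrac1{p^2}|V|^{1/2}$ vanish, its spectrum is contained in $(-1+\delta,\infty)$ by the assumption $\norm{V}_{L^{3/2}}<S_3$, and this contradicts $-1\in\spec B_\Delta$ directly --- no identification of $\lim m_\mu(\Delta)$ is needed at this point. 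You should restructure your argument accordingly: assume $\Delta(\sqrt\mu)\geq B\mu$ along a subsequence first, and use that assumption to kill both $m_\mu(\Delta)$ and $A_{\Delta,\mu}$.
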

\begin{proof}
Suppose for contradiction that $\frac{\Delta(\sqrt{\mu})}{\mu}$ does not vanish. That is, 
suppose that there is some subsequence with $\Delta(\sqrt{\mu}) > B\mu$ for $\mu \to 0$ for some constant $B > 0$. 
We use the decomposition
\[
  B_\Delta 
    = V^{1/2} \frac{1}{p^2} |V|^{1/2} + m_\mu(\Delta) \ket{V^{1/2}} \bra{|V|^{1/2}} + A_{\Delta, \mu}.
\]  
By the assumptions on $V$, the spectrum of $V^{1/2}\frac{1}{p^2}|V|^{1/2}$ is contained
in $(-1, \infty)$. We show that the remaining two terms in the decomposition above 
vanish in the limit $\mu \to 0$, and so that the spectrum 
of $B_\Delta$ approaches that of $V^{1/2}\frac{1}{p^2}|V|^{1/2}$. 
Since the latter has its lowest eigenvalue strictly larger
than $-1$, we get a contradiction. 

For $m_\mu(\Delta)$ we use \cref{prop.energy.gap.low.density.expansion.m} above. 
The only term that does not immediately vanish
in the limit $\mu \to 0$ is the term
\[
  \frac{\mu^{1/2}}{4\pi^2} 
  	\int_1^\infty \frac{\sqrt{1 + s}}{\sqrt{s^2 + \left(\frac{\Delta(\sqrt{\mu})}{\mu}\right)^2}} - \frac{1}{\sqrt{1 + s}} \ud s.
\]
By splitting this integral according to $s < \frac{\Delta(\sqrt{\mu})}{\mu}$ 
and $s > \frac{\Delta(\sqrt{\mu})}{\mu}$ we see that this term may be bounded by
$C\mu^{-1/2} \Delta(\sqrt{\mu}) \leq C\mu^{1/4}$ by \cref{prop.bound.delta.energy.gap.low.density}. 
Hence this term indeed also vanishes. 

For the kernel of $A_{\Delta, \mu}$ we use that 
$\abs{\frac{\sin b}{b} - 1} \leq C \min\{1, b^2\} \leq C b^{\gamma}$ 
for any $0\leq \gamma \leq 2$ for the specific choice of $\gamma = \frac{1}{2}$. 
Then
\begin{multline*}
  \abs{A_{\Delta, \mu}(x,y)}
%   \\
    \leq  
    C |V(x)|^{1/2} |V(y)|^{1/2} |x-y|^{1/2}\left[\int_0^{\sqrt{2\mu}} \abs{\frac{1}{E} - \frac{1}{p^2}} p^{5/2} \ud p
      + \int_{\sqrt{2\mu}}^\infty \abs{\frac{1}{E} - \frac{1}{p^2}} p^{5/2} \ud p \right].
\end{multline*}
For the first integral we bound $E(p) \geq \Delta(p) \geq B\mu$ and so
\[
  \int_0^{\sqrt{2\mu}} \abs{\frac{1}{E} - \frac{1}{p^2}} p^{5/2} \ud p
    \leq \int_0^{\sqrt{2\mu}} \frac{1}{B\mu} (2\mu)^{5/4} + (2\mu)^{1/4} \ud p
    \leq C \mu^{3/4}.
\]
We bound the second integral as follows. First, with the substitution $s = \frac{p^2 - \mu}{\mu}$ we get
\begin{align*}
\int_{\sqrt{2\mu}}^\infty \abs{\frac{1}{E_\Delta} - \frac{1}{p^2}} p^{5/2} \ud p
  & = \frac{\mu^{3/4}}{2} \int_1^\infty 
      \abs{ \frac{1 + s}{\sqrt{s^2 + \left(\frac{\Delta(\sqrt{\mu}\sqrt{1 + s})}{\mu}\right)^2}} - 1}
        \frac{1}{(1 + s)^{1/4}} \ud s
  \\
  & \leq \frac{\mu^{3/4}}{2} \int_1^\infty \frac{1}{s(1 + s)^{1/4}} 
        + \frac{\sqrt{s^2 + \left(\frac{\Delta(\sqrt{\mu}\sqrt{1 + s})}{\mu}\right)^2} - s}{s(1+s)^{1/4}} \ud s
  \\
  & \leq C \mu^{3/4} + C\mu^{-1/4} \int_1^\infty \frac{\Delta(\sqrt{\mu}\sqrt{1 + s})}{s(1 + s)^{1/4}} \ud s
  \\
  & \leq C \mu^{1/2},
\end{align*}
where we used that $|\Delta(p)| \leq C\mu^{3/4}$.
The integral 
$\iint |V(x)||V(y)||x-y| \ud x \ud y < \infty$ is finite 
by the assumptions on $V$. 
Thus $\norm{A_{\Delta, \mu}}_{2}\leq C\mu^{1/2}$ vanishes as desired. 
\end{proof}

\noindent
Using this refined bound, $\Delta(\sqrt{\mu}) = o(\mu)$, we may use a dominated convergence
argument to show that
\begin{multline*}
m_\mu(\Delta)
  = \frac{\sqrt{\mu}}{4\pi^2} 
      \left[
        \int_0^1 \frac{\sqrt{1 - s} - 1}{s} 
          + \frac{\sqrt{1 + s} - 1}{s} 
    - \frac{1}{\sqrt{1 - s}} - \frac{1}{\sqrt{1 + s}} \ud s
      \right.
  \\
      \left.
    + \int_0^1 \frac{2}{\sqrt{s^2 + \left(\frac{\Delta( \sqrt{\mu})}{\mu}\right)^2}} \ud s
    + \int_1^\infty \frac{\sqrt{1 + s}}{s} 
          - \frac{1}{\sqrt{1 + s}} \ud s
    + o(1)
      \right].
\end{multline*}
These integrals can be computed (somewhat easily by hand). 
This is done in \cite{hainzl.seiringer.08}.
We conclude that 
\begin{equation}\label{eqn.m.asymptotics}
  m_\mu(\Delta) = \frac{\sqrt{\mu}}{2\pi^2}\left( \log \frac{\mu}{\Delta(\sqrt{\mu})} - 2 + \log 8 + o(1)\right)
\end{equation}
in the limit $\mu \to 0$. 
In particular $m_\mu(\Delta) \gg \sqrt{\mu}$.
Now, we are interested in bounding $A_{\Delta, \mu}$.
\begin{prop}\label{prop.energy.gap.low.density.bound.A}
The operator $A_{\Delta, \mu}$ vanishes in the following sense. 
\[
  \lim_{\mu \to 0} \frac{\norm{A_{\Delta, \mu}}_2}{m_\mu(\Delta)} = 0.
\]
\end{prop}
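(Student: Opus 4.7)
The plan is to bound $\norm{A_{\Delta,\mu}}_2$ in Hilbert--Schmidt norm by adapting the kernel estimate from the earlier proposition showing $\Delta(\sqrt{\mu}) = o(\mu)$, but now without the (by-contradiction) hypothesis $\Delta(p)\geq B\mu$ that was used there. As in that proof, the bound $|\sin b/b - 1|\leq Cb^{1/2}$ gives
\[
  \abs{A_{\Delta,\mu}(x,y)} \leq C|V(x)|^{1/2}|V(y)|^{1/2}|x-y|^{1/2}\,I(\mu), \qquad I(\mu) := \int_0^\infty \abs{\tfrac{1}{E}-\tfrac{1}{p^2}}p^{5/2}\ud p,
\]
and since $\iint |V(x)||V(y)||x-y|\ud x\ud y < \infty$ this yields $\norm{A_{\Delta,\mu}}_2 \leq CI(\mu)$. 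In view of \cref{eqn.m.asymptotics} it therefore suffices to show $I(\mu) = o(\sqrt{\mu}\log(\mu/\Delta(\sqrt{\mu})))$.

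For $p > \sqrt{2\mu}$, the computation from the earlier proposition used only the upper bound $|\Delta(p)|\leq C\mu^{3/4}$ of \cref{prop.bound.delta.energy.gap.low.density}, and still gives $\int_{\sqrt{2\mu}}^\infty \abs{1/E - 1/p^2}p^{5/2}\ud p \leq C\sqrt{\mu}$ unchanged. For $p < \sqrt{2\mu}$, I would change variables $p = \sqrt{\mu}\sqrt{1+s}$ and split $|1/E - 1/p^2| \leq 1/E + 1/p^2$: the $1/p^2$ piece is immediately $O(\mu^{3/4})$, while
\[
  \int_0^{\sqrt{2\mu}} \frac{p^{5/2}}{E(p)}\ud p \leq C\mu^{3/4}\int_{-1}^{1} \frac{(1+s)^{3/4}\,\ud s}{\sqrt{s^2 + (\Delta(\sqrt{\mu}\sqrt{1+s})/\mu)^2}}.
\]
With a uniform lower bound of the form $\Delta(\sqrt{\mu}\sqrt{1+s}) \geq c\,\Delta(\sqrt{\mu})$ on $s\in[-1,1]$, the right-hand side is at most $C\mu^{3/4}\log(\mu/\Delta(\sqrt{\mu}))$.

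The main obstacle is establishing that uniform lower bound on $\Delta$ near the Fermi sphere. I would extract it exactly as in the proof of \cref{prop.energy.gap.low.density.expansion.m}: using $\hat V\leq 0$ and $\hat\alpha_{\mu,V}\geq 0$, the representation $\Delta(p) = 2(2\pi)^{-3/2}\int |\hat V(p-q)|\hat\alpha_{\mu,V}(q)\ud q$ together with $|\hat V(p-q)|\geq |\hat V(0)|/2$ on $|p-q|\leq 2\eps$ gives $\Delta(p)\geq c\norm{\hat\alpha_{\mu,V} 1_{\{|q|\leq\eps\}}}_{L^1}$ whenever $|p|\leq\eps$. Since all radii $\sqrt{\mu}\sqrt{1+s}$, $s\in[-1,1]$, lie below $\eps$ for small $\mu$, and \cref{prop.energy.gap.low.density.expansion.m} already matches $\Delta(\sqrt{\mu})$ with the same $L^1$-norm from above up to constants, the required lower bound follows. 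Putting everything together, $I(\mu) \leq C\mu^{3/4}(1 + \log(\mu/\Delta(\sqrt{\mu})))$, and dividing by $m_\mu(\Delta)\sim \sqrt{\mu}\log(\mu/\Delta(\sqrt{\mu}))/(2\pi^2)$ from \cref{eqn.m.asymptotics} produces a ratio of order $\mu^{1/4}\to 0$.
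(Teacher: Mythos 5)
Your argument is correct, but it takes a genuinely different route from the paper on the key point, namely the integral over the region $p^2<2\mu$. The paper never evaluates that integral explicitly: it uses the inequality $\abs{\tfrac{1}{E}-\tfrac{1}{p^2}}\leq(\tfrac{1}{E}-\tfrac{1}{p^2})+\tfrac{2}{p^2}$ and pulls out powers of $p$ to bound
\[
  \int_0^{\sqrt{2\mu}}\abs{\tfrac{1}{E}-\tfrac{1}{p^2}}p^{\gamma}\ud p\leq C\,m_\mu(\Delta)\,\mu^{(\gamma-2)/2},
\]
i.e.\ it controls the near-Fermi-surface contribution by $m_\mu(\Delta)$ \emph{itself} times a positive power of $\mu$, so that dividing by $m_\mu(\Delta)$ trivially gives $O(\mu^{1/4})$ with no information about $\Delta$ beyond positivity. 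You instead compute the inner integral explicitly and get a logarithm, $C\mu^{3/4}(1+\log(\mu/\Delta(\sqrt\mu)))$, which costs you an extra ingredient: the pointwise lower bound $\Delta(p)\geq c\,\Delta(\sqrt\mu)$ for $|p|\leq\sqrt{2\mu}$. Your derivation of that bound is sound --- it is exactly the two-sided comparison with $\norm{\hat\alpha_{\mu,V}1_{\{|q|\leq\eps\}}}_{L^1}$ already carried out in the proof of \cref{prop.energy.gap.low.density.expansion.m}, with the roles of the upper and lower estimates swapped --- and the outer region is handled identically to the paper, using only $\abs{\Delta}\leq C\mu^{3/4}$. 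Two small caveats. First, your final assembled bound drops the outer-region term $C\sqrt\mu$, whose quotient by $m_\mu(\Delta)\sim\tfrac{\sqrt\mu}{2\pi^2}\log(\mu/\Delta(\sqrt\mu))$ is only $O(1/\log(\mu/\Delta(\sqrt\mu)))$, not $O(\mu^{1/4})$; the limit is still $0$ because $\Delta(\sqrt\mu)=o(\mu)$, but the stated rate $\mu^{1/4}$ is an overstatement. Second, the paper's version of the kernel bound (with the $Z$-split of $|x-y|$) is set up so that it can be reused verbatim for the later sharper estimate \eqref{eqn.A.bound.improved}, $\norm{A_{\Delta,\mu}}_2=o(\mu^{1/4})$; your logarithmic bound $\mu^{3/4}\log(\mu/\Delta(\sqrt\mu))$ is in fact of order exactly $\mu^{1/4}$ once one knows $\sqrt\mu\log(\mu/\Delta(\sqrt\mu))\to-\pi/(2a)$, so it would not suffice there, though that is outside the scope of this proposition.
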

\begin{proof}
The proof is similar as above, only we give a more refined bound on the kernel. 
We bound the $\frac{\sin b}{b}$ term by
\begin{multline*}
  \abs{\frac{\sin |p||x-y|}{|p||x-y|} - 1}
    \\
    \leq C \left[p^2 Z^2 1_{\{|x-y| < Z\}} + |p|^{1/2}|x-y|^{1/2} 1_{\{|x-y| > Z\}} \right]
    1_{\{p^2 < 2\mu\}} 
    + C |p|^{1/2}|x-y|^{1/2} 1_{\{p^2 > 2\mu\}}.
\end{multline*}
Where $Z > 0$ is arbitrary, and the constant $C$ does not depend on $Z$.
Then
\begin{multline*}
\abs{A_{\Delta, \mu}(x,y)}
  \leq C |V(x)|^{1/2}|V(y)|^{1/2} 
    \left[ 
      Z^2 \int_0^{\sqrt{2\mu}} \abs{\frac{1}{E_\Delta} - \frac{1}{p^2}} p^4 \ud p
    \right.
  \\
    \left.
      + |x-y|^{1/2} 1_{\{|x-y| > Z\}} \int_0^{\sqrt{2\mu}} \abs{\frac{1}{E_\Delta} - \frac{1}{p^2}} p^{5/2} \ud p
      + |x-y|^{1/2} \int_{\sqrt{2\mu}}^{\infty} \abs{\frac{1}{E_\Delta} - \frac{1}{p^2}} p^{5/2} \ud p
    \right].
\end{multline*}
Now, the first and second integral may be bounded by 
$ m_\mu(\Delta) \mu$ and $ m_\mu(\Delta) \mu^{1/4}$ 
as follows. For any $\gamma$ we may bound
\[
  \int_0^{\sqrt{2\mu}} \abs{\frac{1}{E_\Delta} - \frac{1}{p^2}} p^\gamma \ud p
    \leq \int_0^{\sqrt{2\mu}} \left(\frac{1}{E_\Delta} - \frac{1}{p^2}\right) p^\gamma + 2 p^{\gamma - 2} \ud p
%    \leq C  m_\mu(\Delta) \mu^{\gamma/2} + C\mu^{\frac{\gamma - 1}{2}} 
    \leq C m_\mu(\Delta) \mu^\frac{\gamma - 2}{2}.
\]
Similarly as before, the last integral may be bounded by $\mu^{1/2} \ll  m_\mu(\Delta)$. 
Again, by the assumptions on $V$ it follows that $\iint |V(x)||V(y)||x-y| \ud x \ud y < \infty$ is finite. 
Thus we get
\[
  \lim_{\mu\to 0} \frac{\norm{A_{\Delta, \mu}}_2}{ m_\mu(\Delta)} = 0.
    %\leq C \left(\iint_{\{|x-y| > Z\}} |V(x)||V(y)||x-y| \ud x \ud y\right)^{1/2}.
  \qedhere
\]
\end{proof} 

\noindent
We may decompose 
\[
  1 + B_\Delta = \left(1 + V^{1/2} \frac{1}{p^2}|V|^{1/2}\right) 
    \left(1 + \frac{m_\mu(\Delta)}{1 + V^{1/2} \frac{1}{p^2}|V|^{1/2}} 
      \left( \ket{V^{1/2}}\bra{|V|^{1/2}} + \frac{A_{\Delta,\mu}}{m_\mu(\Delta)}\right)\right).
\]
Since $-1$ is an eigenvalue of $B_\Delta$ we get that $-1$ is an eigenvalue of 
\[
  \frac{m_\mu(\Delta)}{1 + V^{1/2} \frac{1}{p^2}|V|^{1/2}} 
    \left( \ket{V^{1/2}}\bra{|V|^{1/2}} + \frac{A_{\Delta,\mu}}{m_\mu(\Delta)}\right).
\] 
\Cref{prop.energy.gap.low.density.bound.A} 
above gives that the term $\frac{A_{\Delta,\mu}}{m_\mu(\Delta)}$ vanishes in the limit $\mu \to 0$.
The other term has rank one and thus we get
\[
  \lim_{\mu\to 0} \frac{-1}{m_\mu(\Delta)} 
    = \longip{|V|^{1/2}}{\frac{1}{1 + V^{1/2} \frac{1}{p^2}|V|^{1/2}}}{V^{1/2}} = 4\pi a.
\]
We now show that the rate of convergence is $o(\mu^{1/2})$.

First, we improve on \cref{prop.energy.gap.low.density.bound.A}. 
Since $m_\mu(\Delta)$ is of order 1 in the limit $\mu \to 0$
we get for the third integral in the proof of \cref{prop.energy.gap.low.density.bound.A} that
\[
  \int_{\sqrt{2\mu}}^\infty \abs{\frac{1}{E_\Delta} - \frac{1}{p^2}} p^{5/2} \ud p
    \leq C \mu^{1/2} \ll \mu^{1/4}.
\]
Hence, for any $Z > 0$ and a constant $C$, that does not depend on $Z$ we get the bound
\[
  \limsup_{\mu\to 0} \frac{\norm{A_{\Delta, \mu}}_2}{\mu^{1/4}}
    \leq C \left(\iint_{\{|x-y| > Z\}} |V(x)||V(y)||x-y| \ud x \ud y\right)^{1/2}.
\]
By the assumptions on $V$, the integrand here is integrable and so
taking $Z \to \infty$ we get that
\begin{equation}\label{eqn.A.bound.improved}
  \lim_{\mu\to 0} \frac{\norm{A_{\Delta, \mu}}_2}{\mu^{1/4}} = 0.
\end{equation}
Additionally, $A_{\Delta, \mu}$ vanishes in the limit $\mu \to 0$.
Thus the operator
\[
  1 + V^{1/2}\frac{1}{p^2}|V|^{1/2} + A_{\Delta, \mu}
\] 
is invertible for small $\mu$ and so we may write
\[
  1 + B_\Delta = \left(1 + V^{1/2}\frac{1}{p^2}|V|^{1/2} + A_{\Delta, \mu}\right)
    \left(1 + \frac{ {m}_\mu(\Delta)}{1 + V^{1/2}\frac{1}{p^2}|V|^{1/2} + A_{\Delta, \mu}} \ket{V^{1/2}}\bra{|V|^{1/2}}\right).
\]
Since $-1$ is an eigenvalue of $B_\Delta$ we get that $-1$ is an eigenvalue of the latter operator.
This has rank one and so we get that
\begin{equation}\label{eqn.energy.gap.low.density.scatt}
  \frac{-1}{m_\mu(\Delta)} = \longip{|V|^{1/2}}{\frac{1}{1 + V^{1/2}\frac{1}{p^2}|V|^{1/2} + A_{\Delta, \mu}}}{V^{1/2}}.
\end{equation}
We decompose the middle operator on the right-hand-side as 
\begin{multline*}
\frac{1}{1 + V^{1/2} \frac{1}{p^2}|V|^{1/2} + A_{\Delta, \mu}}
%  & = &
\\	
	=
    \frac{1}{1 + V^{1/2} \frac{1}{p^2}|V|^{1/2}} 
    - 
      \frac{1}{1 + V^{1/2} \frac{1}{p^2}|V|^{1/2}} 
        A_{\Delta, \mu} 
      \frac{1}{1 + V^{1/2} \frac{1}{p^2}|V|^{1/2}}
  \\
%  & & 
    + \frac{1}{1 + V^{1/2} \frac{1}{p^2}|V|^{1/2}} 
        A_{\Delta, \mu} 
      \frac{1}{1 + V^{1/2} \frac{1}{p^2}|V|^{1/2} + A_{\Delta, \mu}} 
        A_{\Delta, \mu}
      \frac{1}{1 + V^{1/2} \frac{1}{p^2}|V|^{1/2}},
\end{multline*}
which is perhaps most easily seen by writing the left-hand-side as a power series in $A_{\Delta, \mu}$.
Plugging this into \cref{eqn.energy.gap.low.density.scatt} above 
we get $4 \pi a$ for the first term.
The second term gives 
\[
  \longip{f}{\sgn V A_{\Delta, \mu}}{f}, \qquad \textnormal{with} \qquad 
    f = \frac{1}{1 + V^{1/2}\frac{1}{p^2}|V|^{1/2}} V^{1/2}.
\]
This function $f$ is the same function $f$, which was studied in \cite{hainzl.seiringer.scat.length}.
There it was shown that $f$ satisfies $f(x) |V(x)|^{1/2} (1 + |x|) \in L^1$.

The third term in the expansion above is $o(\mu^{1/2})$ by \cref{eqn.A.bound.improved} above.
We show that the second term is $o(\mu^{1/2})$ as well.
\begin{prop}
In the limit $\mu \to 0$ we have
$\longip{f}{\sgn V A_{\Delta, \mu}}{f} = o(\mu^{1/2})$.
\end{prop}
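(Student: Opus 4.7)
The plan is to reduce the quadratic form to a one-dimensional radial integral in $p$ and then exploit that the resulting weight vanishes at the origin. Setting $g(x) := f(x)|V(x)|^{1/2}$ and using $\sgn V(x)\cdot V(x)^{1/2} = |V(x)|^{1/2}$ together with $\frac{\sin p|z|}{p|z|} = \frac{1}{4\pi}\int_{S^2} e^{ipe\cdot z}\,d\sigma(e)$ and Plancherel to carry out the $x,y$ integrals, one obtains
\[
  \longip{f}{\sgn V A_{\Delta,\mu}}{f} = 4\pi \int_0^\infty \left(\frac{1}{E(p)} - \frac{1}{p^2}\right) p^2\, h(p)\,dp, \qquad h(p) := \overline{|\hat g|^2}(p) - |\hat g(0)|^2,
\]
where $\overline{|\hat g|^2}(p)$ is the spherical average $\frac{1}{4\pi}\int_{S^2}|\hat g(pe)|^2\,d\sigma(e)$.

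Because $V$ is radial, the operator $V^{1/2} p^{-2}|V|^{1/2}$ commutes with rotations, so $f$ and hence $g$ are radial; $\hat g$ is therefore radial and real-valued. The cited regularity $f|V|^{1/2}(1+|x|) \in L^1$ gives $xg \in L^1$, making $\hat g$ of class $C^1$ with bounded gradient, and radiality forces $\nabla\hat g(0) = 0$. Hence $\hat g(k) - \hat g(0) = o(|k|)$, which on sphere-averaging yields $h(p) = o(p)$ as $p\to 0$; in addition $h$ is globally Lipschitz, bounded, and $h(0) = 0$.

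After substituting $p = \sqrt\mu q$ the integral becomes $4\pi\sqrt\mu\int_0^\infty (\tfrac{1}{\tilde E(q)} - \tfrac{1}{q^2}) q^2\, h(\sqrt\mu q)\,dq$ with $\tilde E(q) = E(\sqrt\mu q)/\mu$, so it suffices to show this $q$-integral is $o(1)$. I split it using the $\varepsilon$ from \cref{prop.small.p.vs.large.p} and a small fixed $\delta>0$. For $\sqrt\mu q > \varepsilon$, converting back to $p$ and using $(\tfrac{1}{E} - \tfrac{1}{p^2})p^2 \leq C\mu/p^2$ contributes $O(\mu)$ to the original integral, i.e.\ $O(\sqrt\mu)$ to the $q$-integral. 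For $\sqrt\mu q \leq \varepsilon$ and $|q-1|>\delta$, the bound $|\Delta(p)| \leq C|\Delta(\sqrt\mu)| = o(\mu)$ (valid for $|p|\leq\varepsilon$, as in the proof of \cref{prop.energy.gap.low.density.expansion.m}) forces $\tilde E(q) \to |q^2-1|$ uniformly; the function $\|h\|_\infty/|q^2-1|$ is a $\mu$-independent integrable dominator on this set, and since $h(\sqrt\mu q) \to 0$ pointwise, dominated convergence returns $o(1)$. Inside the Fermi-surface window $|q-1|<\delta$ I decompose $h(\sqrt\mu q) = h(\sqrt\mu) + (h(\sqrt\mu q) - h(\sqrt\mu))$: the first term equals $h(\sqrt\mu)\cdot \int_{|q-1|<\delta}(\tfrac{1}{\tilde E} - \tfrac{1}{q^2}) q^2\,dq$, and since this bracketed integral differs from $2\pi^2 m_\mu(\Delta)/\sqrt\mu = O(1/\sqrt\mu)$ by a bounded quantity (by the previous step) while $h(\sqrt\mu) = o(\sqrt\mu)$, the product is $o(1)$; the remainder uses $|h(\sqrt\mu q) - h(\sqrt\mu)| \leq C\sqrt\mu|q-1|$ together with $|q-1|/\tilde E(q) \leq 1/|q+1|$ to produce a further $O(\sqrt\mu)$.

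The main obstacle is precisely the Fermi-surface region. The pointwise Lipschitz estimate $|h(p)|\leq Cp$ on its own is insufficient because $\int_0^\infty (\tfrac{1}{\tilde E} - \tfrac{1}{q^2}) q^2\,dq = 2\pi^2 m_\mu(\Delta)/\sqrt\mu$ diverges like $1/\sqrt\mu$ due to the logarithmic singularity at $q=1$, and pairing this divergence with $|h(\sqrt\mu q)| \leq C\sqrt\mu q$ only yields $O(1)$. The crucial improvement to $h(\sqrt\mu) = o(\sqrt\mu)$, which depends essentially on the radiality of $g$ and the resulting $\nabla\hat g(0) = 0$, is what tames the logarithm; the Lipschitz compensation $|q-1|/\tilde E(q) \leq 1/|q+1|$ then handles the remainder.
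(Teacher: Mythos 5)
Your argument is correct, but it reaches the conclusion by a genuinely different route than the paper. The paper stays in position space: it refines the pointwise kernel bound on $A_{\Delta,\mu}(x,y)$ by splitting $\bigl|\frac{\sin(p|x-y|)}{p|x-y|}-1\bigr|$ three ways with a cutoff $Z$ in $|x-y|$, obtains contributions of order $Z^2\mu$, $\mu^{1/2}$ (the latter supported on $|x-y|>Z$) and $\mu^{5/8}$, and then uses $f(x)|V(x)|^{1/2}(1+|x|)\in L^1$ to send $Z\to\infty$ and kill the coefficient of the borderline $\mu^{1/2}$ term. You instead diagonalize the quadratic form in Fourier space, writing it as $4\pi\int_0^\infty\bigl(\tfrac1E-\tfrac1{p^2}\bigr)p^2h(p)\,dp$ with $h(p)=\overline{|\hat g|^2}(p)-|\hat g(0)|^2$, $g=f|V|^{1/2}$; the same hypothesis $g(1+|x|)\in L^1$ now enters as $\hat g\in C^1$, and the decisive point is $h(p)=o(p)$ --- which, note, follows already from the spherical average annihilating the linear term $\nabla\hat g(0)\cdot pe$, so radiality is not strictly needed for it. That $o(p)$ is exactly what beats the $\log\frac{\mu}{\Delta(\sqrt\mu)}$ divergence concentrated at the Fermi surface, while the Lipschitz bound on $h$ controls the remainder there. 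Both proofs consume the same regularity of $f|V|^{1/2}$; yours makes the cancellation mechanism at $p=0$ more transparent at the cost of a more delicate three-region analysis. Two harmless imprecisions: for $p>\eps$ your bound $|\tfrac1E-\tfrac1{p^2}|p^2\le C\mu/p^2$ implicitly uses $|\Delta(p)|=o(\mu)$ (which does hold uniformly, by the argument in the proof of \cref{prop.energy.gap.low.density.expansion.m} combined with $\Delta(\sqrt\mu)=o(\mu)$); even the cruder $|\Delta|\le C\mu^{3/4}$ would give $O(\mu^{1/4})=o(1)$ for that region. And the dominating function for $|q-1|>\delta$ should be read as a constant $C_\delta$ on $(0,1-\delta)$ together with $C/(q^2-1)$ on $(1+\delta,\infty)$, which is what your estimate actually produces.
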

\begin{proof}
This is similar to the bound on $A_{\Delta, \mu}$ above. 
We bound the kernel of $A_{\Delta, \mu}$ by
\begin{multline*}
|A_{\Delta, \mu}(x,y)|
  \leq C|V(x)|^{1/2} |V(y)|^{1/2} 
      \left[
    Z^2 \int_0^{\sqrt{2\mu}} \abs{\frac{1}{E_\Delta} - \frac{1}{p^2}} p^4 \ud p
      \right.
  \\
      \left.
    + |x-y|1_{\{|x-y| > Z\}} \int_0^{\sqrt{2\mu}} \abs{\frac{1}{E_\Delta} - \frac{1}{p^2}} p^3 \ud p
    + |x-y|^{3/4} \int_{\sqrt{2\mu}}^\infty \abs{\frac{1}{E_\Delta} - \frac{1}{p^2}}p^{11/4} \ud p
      \right].
\end{multline*}
These integrals are bounded by $\mu, \mu^{1/2}$ and $\mu^{5/8}$ respectively similarly as in 
\cref{prop.energy.gap.low.density.bound.A} above. (Recall that $m_\mu(\Delta)$ is of order 1.)
Thus
\[
  \limsup_{\mu \to 0} \frac{\abs{\longip{f}{\sgn V A_{\Delta, \mu}}{f}}}{\mu^{1/2}} 
    \leq C \iint_{\{|x-y| > Z\}} |f(x)||V(x)|^{1/2} |x-y| |f(y)| |V(y)|^{1/2} \ud x \ud y.
\]
Since $f(x) |V(x)|^{1/2}(1 + |x|) \in L^1$ we get the desired by taking $Z \to \infty$.
\end{proof}

\noindent
We thus conclude that 
\[
  m_\mu(\Delta) = \frac{-1}{4\pi a} + o \left(\mu^{1/2}\right).
\]
With the asymptotics of $m_\mu(\Delta)$ above, \cref{eqn.m.asymptotics}, we thus get
\[
  \lim_{\mu \to 0} \left( \log \frac{\mu}{\Delta(\sqrt{\mu})} + \frac{\pi}{2 \sqrt{\mu} a}\right) = 2 - \log 8.
\]
Now, we want to replace $\Delta(\sqrt{\mu})$ by the energy gap $\Xi = \inf E$. 
Clearly $\Xi \leq \Delta(\sqrt{\mu})$. On the other hand
\[
  \Xi \geq \min_{|p^2 - \mu| \leq \Xi} \Delta(p) \geq \Delta(\sqrt{\mu}) ( 1 + o(1))
\]
Thus we conclude the desired
\[
  \lim_{\mu \to 0} \left( \log \frac{\mu}{\Xi} + \frac{\pi}{2 \sqrt{\mu} a}\right) = 2 - \log 8.
\]
This concludes the proof of \cref{thm.energy.gap.low.density}.

\subsection*{Acknowledgements}
Most of this work was done as part of the author's master's thesis. 
The author would like to thank Jan Philip Solovej for his supervision of this process.

% ........................ Bibliography ----------------------
\bibliographystyle{spmpsci}
\bibliography{bibliography}

\begin{thebibliography}{10}
\providecommand{\url}[1]{{#1}}
\providecommand{\urlprefix}{URL }
\expandafter\ifx\csname urlstyle\endcsname\relax
  \providecommand{\doi}[1]{DOI~\discretionary{}{}{}#1}\else
  \providecommand{\doi}{DOI~\discretionary{}{}{}\begingroup
  \urlstyle{rm}\Url}\fi

\bibitem{bcs.original}
Bardeen, J., Cooper, L.N., Schrieffer, J.R.: Theory of superconductivity.
\newblock Phys. rev. \textbf{108}(5) (1957)

\bibitem{braunlich.hainzl.seiringer}
Bräunlich, G., Hainzl, C., Seiringer, R.: Translation-invariant quasi-free
  states for fermionic systems and the bcs approximation.
\newblock Reviews in Mathematical Physics \textbf{26}(07), 1450012 (2014).
\newblock \doi{10.1142/S0129055X14500123}.
\newblock \urlprefix\url{https://doi.org/10.1142/S0129055X14500123}

\bibitem{frank.hainzl.naboko.seiringer}
Frank, R., Hainzl, C., Naboko, S., Seiringer, R.: The critical temperature for
  the bcs equation at weak coupling.
\newblock J. Geom. Anal. \textbf{17} (2007).
\newblock \doi{10.1007/BF02937429}

\bibitem{gorkov.melik-barkhudarov}
Gor'kov, L.P., Melik-Barkhudarov, T.K.: Contributions to the theory of
  superfluidity in an imperfect fermi gas.
\newblock Soviet Physics JETP \textbf{13}(5) (1961)

\bibitem{hainzl.hamza.seiringer.solovej}
Hainzl, C., Hamza, E., Seiringer, R., Solovej, J.P.: The bcs functional for
  general pair interactions.
\newblock Commun. Math. Phys \textbf{281} (2008).
\newblock \doi{10.1007/s00220-008-0489-2}

\bibitem{hainzl.seiringer.scat.length}
Hainzl, C., Seiringer, R.: The bcs critical temperature for potentials with
  negative scattering length.
\newblock Lett. Math. Phys \textbf{84}, 99--107 (2008)

\bibitem{hainzl.seiringer.08}
Hainzl, C., Seiringer, R.: Critical temperature and energy gap for the bcs
  equation.
\newblock Physical Review B \textbf{77} (2008).
\newblock \doi{10.1103/PhysRevB.77.184517}

\bibitem{hainzl.seiringer.review.08}
Hainzl, C., Seiringer, R.: Spectral properties of the bcs gap equation of
  superfluidity  (2008).
\newblock \doi{10.1142/9789812832382_0009}

\bibitem{hainzl.seiringer.16}
Hainzl, C., Seiringer, R.: The bardeen–cooper–schrieffer functional of
  superconductivity and its mathematical properties.
\newblock Journal of Mathematical Physics \textbf{57}(2), 021101 (2016).
\newblock \doi{10.1063/1.4941723}.
\newblock \urlprefix\url{https://doi.org/10.1063/1.4941723}

\bibitem{leggett.diatomic}
Leggett, A.J.: Diatomic molecules and cooper pairs.
\newblock In: A.~Pekalski, J.A. Przystawa (eds.) Modern Trends in the Theory of
  Condensed Matter, pp. 13--27. Springer Berlin Heidelberg, Berlin, Heidelberg
  (1980)

\bibitem{analysis}
Lieb, E.H., Loss, M.: Analysis, 2. ed. edn.
\newblock Graduate studies in mathematics ; 14. American Mathematical Society,
  Providence, R.I (2001)

\bibitem{nozieres.schmitt-rink}
Nozières, P., Schmitt-Rink, S.: Bose condensation in an attractive fermion
  gas: From weak to strong coupling superconductivity.
\newblock Journal of Low Temperature Physics \textbf{59}, 195–211 (1985)

\end{thebibliography}
\end{document}